\g@addto@macro\bfseries{\boldmath}
\newcommand{\Self}{\textbf{S}} 
\newcommand{\Multi}{\textbf{M}} 
\newcommand{\Inc}{\textbf{I}} 
\newcommand{\Homoto}{\textbf{H}}
\newif\iflong
\newif\ifshort
\newcounter{section-preserve}
\newcounter{theorem-preserve}
\newcounter{lemma-preserve}
\newcounter{figure-preserve}
\newcommand{\blank}[1]{}
\newcommand{\short}[1]{\ifbool{long}{}{#1}}
\newcommand{\longonly}[1]{\ifbool{long}{#1}{}}
\newtoks\magicAppendix
\newtoks\magictoks
\newif\iflater
	\newcommand{\later}[1]{#1}%
\long\def\later#1{\magictoks={#1}%
	\edef\magictodo{\noexpand\magicAppendix={\the\magicAppendix%
			\the\magictoks%
	}}%
	\magictodo}%
\long\def\latertitle#1{\magictoks={\section{#1}}%
	\edef\magictodo{\noexpand\magicAppendix={\the\magicAppendix \par
			\the\magictoks%
	}}%
	\magictodo}
\long\def\both#1{\magictoks={#1}%
	\edef\magictodo{\noexpand\magicAppendix={\the\magicAppendix%
			\noexpand\setcounter{theorem-preserve}{\noexpand\arabic{theorem}}%
			\noexpand\setcounter{theorem}{\arabic{theorem}}%
            \noexpand\setcounter{lemma-preserve}{\noexpand\arabic{lemma}}%
			\noexpand\setcounter{lemma}{\arabic{lemma}}%
			\noexpand\setcounter{section-preserve}{\noexpand\arabic{section}}%
			\noexpand\setcounter{section}{\arabic{section}}%
			\noexpand\let\noexpand\oldsection=\noexpand\thesection%
			\noexpand\def\noexpand\thesection{\thesection}%
			\the\magictoks%
			\noexpand\setcounter{theorem}{\noexpand\arabic{theorem-preserve}}%
            \noexpand\setcounter{lemma}{\noexpand\arabic{lemma-preserve}}%
			\noexpand\setcounter{section}{\noexpand\arabic{section-preserve}}%
			\noexpand\let\noexpand\thesection=\noexpand\oldsection%
	}}%
	\magictodo%
	\the\magictoks}%
\def\magicappendix{\latertrue \the\magicAppendix}
\newcommand{\shortandlong}[1]{%
	\ifbool{long}{%
		#1%
	}{%
		\IfAppendix{}%
		{%
		    #1%
		}%
	}%
}%
\newcommand{\appendixandlong}[1]{%
	\ifbool{long}{%
		#1%
	}{%
		\IfAppendix{%
			#1%
		}%
		{}%
	}%
}%
\newcommand{\appendixandlongorshort}[2]{%
	\ifbool{long}{%
		#1%
	}{%
		\IfAppendix{%
			#1%
		}%
		{%
			#2%
		}%
	}%
}%
\newcommand{\longandshortorappendix}[2]{%
	\ifbool{long}{%
		#1%
	}{%
		\IfAppendix{%
			#2%
		}%
		{%
			#1%
		}%
	}%
}%
\patchcmd{\thmhead}{\thmnote{ {\the\thm@notefont(#3)}}}%
{%
	\ifbool{long}{%
		\ifstrequal{#3}{($\star$)}{%
		}{%
			\thmnote{ {\the\thm@notefont(#3)}}%
		}%
	}{%
		\IfAppendix{%
			\ifstrequal{#3}{($\star$)}{%
			}{%
				\thmnote{ {\the\thm@notefont(#3)}}%
			}%
		}%
		{%
			\thmnote{ {\the\thm@notefont(#3)}}%
		}%
	}%
}
{}{} 
\patchcmd{\@spopargbegintheorem}{\item[\hskip\labelsep{#4#1\ #2}]{#4(#3)\@thmcounterend\ }#5}%
{%
	\ifbool{long}{%
		\ifstrequal{#3}{$\star$}{%
    \item[\hskip\labelsep{#4#1\ #2\@thmcounterend}]#5%
		}{%
    \item[\hskip\labelsep{#4#1\ #2}]{#4(#3)\@thmcounterend\ }#5%
		}%
	}{%
		\IfAppendix{%
			\ifstrequal{#3}{$\star$}{%
      \item[\hskip\labelsep{#4#1\ #2\@thmcounterend}]#5%
			}{%
      \item[\hskip\labelsep{#4#1\ #2}]{#4(#3)\@thmcounterend\ }#5%
			}%
		}%
		{%
    \item[\hskip\labelsep{#4#1\ #2}]{#4(#3)\@thmcounterend\ }#5%
		}%
	}%
}
{}{} 
\title{Edge-Minimum Saturated $k$-Planar Drawings}
\newbox{\myorcidauthbox}
\sbox{\myorcidauthbox}{\large\includegraphics[height=1.7ex]{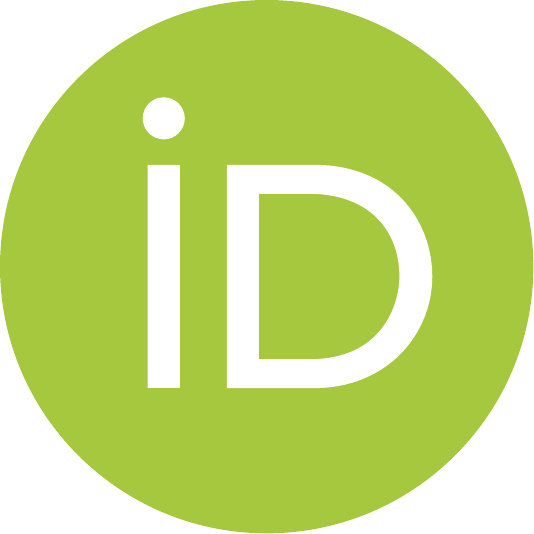}}
\newcommand{\orcid}[1]{\href{https://orcid.org/#1}{\usebox{\myorcidauthbox}}}
\author{Steven Chaplick\inst{1}\orcid{0000-0003-3501-4608} \and 
Fabian Klute\inst{2}\orcid{0000-0002-7791-3604} \and 
Irene Parada\inst{3}\orcid{0000-0003-3147-0083} \and 
Jonathan Rollin\inst{4}\orcid{0000-0002-6769-7098} \and 
Torsten Ueckerdt\inst{5}\orcid{0000-0002-0645-9715}}
\institute{Department of Data Science and Knowledge Engineering, Maastricht University, The Netherlands\\ \email{s.chaplick@maastrichtuniversity.nl} \and
Utrecht University, The Netherlands \\
	\email{f.m.klute@uu.nl}\and
  TU Eindhoven, The Netherlands\\
	\email{i.m.de.parada.munoz@tue.nl} \and
Department of Mathematics and Computer Science, FernUniversität in Hagen, Germany \\ \email{jonathan.rollin@fernuni-hagen.de} \and
  Institute of Theoretical Informatics, Karlsruhe Institute of Technology, Germany \\ \email{torsten.ueckerdt@kit.edu}}
\authorrunning{S. Chaplick, F. Klute, I. Parada, J. Rollin, T. Ueckerdt}
\begin{document}

\maketitle

\begin{abstract}
For a class $\mathcal{D}$ of drawings of loopless (multi-)graphs in the plane, a drawing $D \in \mathcal{D}$ is \emph{saturated} when the addition of any edge to $D$ results in $D' \notin \mathcal{D}$---this is analogous to saturated graphs in a graph class as introduced by Tur\'an (1941) and Erd\H{o}s, Hajnal, and Moon (1964). 
We focus on $k$-planar drawings, that is, graphs drawn in the plane where each edge is crossed at most $k$ times, and the classes $\mathcal{D}$ of all $k$-planar drawings obeying a number of restrictions, such as having no crossing incident edges, no pair of edges crossing more than once, or no edge crossing itself.

While saturated $k$-planar drawings are the focus of several prior works, tight bounds on how sparse these can be are not well understood.
We establish a generic framework to determine the minimum number of edges among all $n$-vertex saturated $k$-planar drawings in many natural classes.
For example, when incident crossings, multicrossings and selfcrossings are all allowed, the sparsest $n$-vertex saturated $k$-planar drawings have $\frac{2}{k - (k \bmod 2)} (n-1)$ edges for any $k \geq 4$, while if all that is forbidden, the sparsest such drawings have $\frac{2(k+1)}{k(k-1)}(n-1)$ edges for any $k \geq 6$.\par
\end{abstract}

\section{Introduction}

Graph \emph{saturation problems} concern the study of edge-extremal $n$-vertex graphs under various restrictions. 
They originate in the works of Tur\'an~\cite{Turan1941} and Erd\H{o}s, Hajnal, and Moon~\cite{EHM1964}. 
For a family $\mathcal{F}$ of graphs, a graph $G$ without loops or parallel edges is called \emph{$\mathcal{F}$-saturated} when no subgraph of $G$ belongs to $\mathcal{F}$ and for every $u,v\in V(G)$, where $uv \notin E(G)$, some subgraph of the graph $G+uv$ belongs to~$\mathcal{F}$. 
Tur\'an~\cite{Turan1941} described, for each $t$, the $n$-vertex graphs that are $\{K_t\}$-saturated and have the maximum number of edges---this led to the introduction of the Tur\'an Numbers where the setting moves from graphs to hypergraphs, see for example the surveys~\cite{keevash_2011,Sidorenko95}. 
Analogously, Erd\H{o}s, Hajnal, and Moon~\cite{EHM1964} studied the $n$-vertex graphs $G$ that are $\{K_t\}$-saturated and have the minimum number of edges. 
This sparsest saturation view has also received much subsequent study~\cite{Faudree2011}, 
and our work fits into this latter direction but concerns ``drawings of (multi-)graphs'', also called \emph{topological (multi-)graphs}. 

There has been increasing interest in saturation problems on drawings of (multi-)graphs in addition to the abstract graphs above. 
A \emph{drawing} is a graph together with a cyclic order of edges around each vertex and the sequence of crossings along each edge so that it can be realized in the plane (or on another specified surface). 
The saturation conditions usually concern the crossings (which can be thought of as avoiding certain topological subgraphs). 
The majority of work has been on \emph{Tur\'an-type} results regarding the maximum number of edges which can occur in an $n$-vertex drawing (without loops and homotopic parallel edges) of a particular drawing style, e.g., $n$-vertex planar drawings are well known to have at most $3n-6$ edges for any $n \geq 3$. 
In the case of planar drawings (i.e., crossing-free in the plane), the sparsest saturation version (as in Erd\H{o}s, Hajnal, and Moon~\cite{EHM1964}) is also equal to the Tur\'an version: Every saturated planar drawing has $3n-6$ edges.

\begin{figure}[t]
 \centering
 \includegraphics{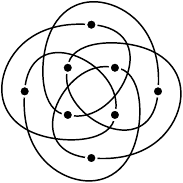}
 \hspace{2em}
 \includegraphics{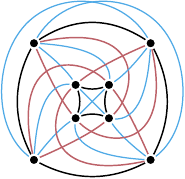}
 \hspace{2em}
 \includegraphics{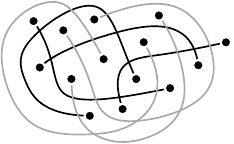}
	\caption{Saturated $4$-planar drawing of the $8$-cycle (left), $3$-planar drawing of the $8$-clique (middle), and  saturated $6$-planar drawing of the $7$-matching (right).}
 \label{fig:k-planar-examples}
\end{figure} 

However, for drawing styles that allow crossings in a limited way, these two measures become non-trivial to compare and can indeed be quite different.
This interesting phenomenon happens for example for \emph{$k$-planar} drawings where at most $k$ crossings on each edge are allowed; and which are the focus of the present paper.
The left of \cref{fig:k-planar-examples} depicts a drawing of the $8$-cycle $C_8$ in which each edge is crossed exactly four times and one cannot add a ninth (non-loop) edge to the drawing while maintaining $4$-planarity, i.e., this is a saturated $4$-planar drawing of $C_8$.
On the other hand, note that even the complete graph $K_8$ in fact admits $3$-planar drawings as shown in the middle of \cref{fig:k-planar-examples}.

In this sense, we call a drawing that attains the Tur\'an-type maximum number of edges a \emph{max-saturated}\footnote{Sometimes these drawings are called \emph{optimal} in the literature~\cite{BekosKR17}.} drawing,
while a sparsest saturated drawing is called \emph{min-saturated}.
The target of this paper is to determine the number of edges in min-saturated $k$-planar drawings of loopless (multi-)graphs%
, i.e., the \emph{smallest} number of edges among all saturated $k$-planar drawings with $n$ vertices.
The answer will always be of the form $\alpha_k \cdot (n-1)$.
However, it turns out that the precise value of $\alpha_k$ depends on numerous subtleties of what precisely we allow in the considered $k$-planar drawings.
Such subtleties are formalized by drawing styles $\Gamma$, each one with its own constant $\alpha_\Gamma$.
As we always require $k$-planarity, we omit $k$ from the notation  $\alpha_\Gamma$.

For example, restricting to connected graphs, we immediately have at least $n-1$ edges on $n$ vertices, i.e., $\alpha_\Gamma \geq 1$.
And in fact we also have $\alpha_\Gamma \leq 1$ for all $k \geq 4$ as testified by entangled drawings of cycles like in the left of \cref{fig:k-planar-examples}.
Allowing disconnected graphs but restricting to contiguous drawings, we immediately have $\alpha_\Gamma \geq 1/2$ since we have minimum degree at least $1$ in that case.
And again we also have $\alpha_\Gamma \leq 1/2$ for all $k \geq 6$ as one can find saturated $k$-planar drawings of matchings like in the right of \cref{fig:k-planar-examples}.
Other subtleties occur when we distinguish whether selfcrossing edges, repeatedly crossing edges, crossing incident edges, etc., are allowed or forbidden.
We enable a concise investigation of all possible combinations by first deriving lower bounds on $\alpha_\Gamma$ for any drawing style that satisfies only some mild assumptions.
We can then consider each drawing style $\Gamma$ and swiftly determine the exact value of $\alpha_\Gamma$, thus determining the smallest number of edges among all $k$-planar drawings of that style on $n$ vertices.
Our results for multigraphs are summarized in \cref{tab:results}.

\subsection{Related Work}
\label{sec:related-work}

For $k$-planar graphs the Tur\'an-type question, the edge count in max-saturated drawings, is well studied.
Any $k$-planar \emph{simple}\footnote{A drawing is simple if any two edges share at most one point. In particular there are no parallel edges.} drawing on $n$ vertices contains at most $3.81\sqrt{k}n$ edges~\cite{Ackerman19}, and better (and tight) bounds are known for small $k$~\cite{Ackerman19,PachRTT06,PachT97}.
Specifically $1$-planar drawings contain at most $4n-8$ edges which is tight~\cite{PachT97}.
For $k\leq 3$, any $k$-planar drawing with the fewest crossings (among all $k$-planar drawings of the abstract graph) is necessarily simple~\cite{PachRTT06}.
Therefore the tight bounds for $k\leq 3$ also hold for drawings that are not necessarily simple.
However, already for $k=4$, Schaefer~\cite[p. 58]{SchaeferDS2020} has constructed $k$-planar graphs having no $k$-planar simple drawings, and these easily generalize to all $k > 4$.
Pach et al.~\cite{PachRTT06} conjectured that for every $k$ there is a max-saturated $k$-planar graph with a simple $k$-planar drawing. 
For $k=2,3$, the max-saturated $k$-planar homotopy-free multigraphs have been characterized~\cite{BekosKR17}.

In the sparsest saturation setting not only min-saturated $k$-planar drawings are of interest but also min-saturated $k$-planar (abstract) graphs: sparse $k$-planar graphs that are no longer $k$-planar after adding any edge~\cite{AuerBGH12,BaratT18,BrandenburgEGGHR12,EadesHKLSS13}.
The questions we address in this work have also been explicitly asked~\cite[Section 3.2]{bp-dagstuhlreport-hong_et_al:DR:2017:7038}.

Recently, the case of saturation problems for simple drawings has come into focus.
The Tur\'an-type question is trivial here as all complete graphs have simple drawings.
However, 
there are constructions of saturated simple drawings (and generalizations thereof) with only $O(n)$ edges~\cite{HajnalIRS18,KynclPRT15}.

\begin{table}[t]
 \centering
 \begin{tabular}{|c|c|c|c|}
 
 \hline
 
 $k$ & restrictions & \makecell{minimum number of\\ edges of saturated $k$-planar\\ drawings on $n$ vertices} & \makecell{tight\\ example\footnotemark}\\
 
 \hline
 \hline
 
 \multirow{2}{*}{$k \geq 4$} & no restriction & \multirow{2}{*}{$\frac{2}{k-(k \bmod 2)}\cdot (n-1)$} & \multirow{2}{*}{\cref{fig:tight-I-k4}} \\
 \cline{2-2}
 & $\Inc$ no incident crossings & & \\
 
 \hline
 
 \multirow{2}{*}{$k \geq 4$} & $\Self$ no selfcrossings & \multirow{2}{*}{$\frac{2}{k-1}\cdot (n-1)$} & \multirow{2}{*}{\cref{fig:tight-SI-S-k4}} \\
 \cline{2-2}
 & $\Self$ no self- and $\Inc$ no incident crossings & & \\
 
 \hline
 
 $k \geq 4$ & $\Multi$ no multicrossings & $\frac{2(k-1)}{(k-1)(k-2)+2}\cdot (n-1)$ & \cref{fig:tight-M-k4} \\
 
 \hline
 
 $k \geq 4$ & $\Self$ no self- and $\Multi$ no multicrossings & $\frac{2(k+1)}{k(k-1)}\cdot (n-1)$ & \cref{fig:tight-MS-k4} \\
 
 \hline
 
 $k = 4$ & \multirow{2}{*}{$\Inc$ no incident and $\Multi$ no multicrossings} & $\frac{4}{5}\cdot (n-1)$ & \multirow{2}{*}{\shortstack{\cref{fig:tight-MI-k45},\\ \cref{fig:tight-MI-k6}}} \\
 \cline{1-1}\cline{3-3}
 $k \geq 5$ & & $\frac{2(k-1)}{(k-1)(k-2)+2}\cdot (n-1)$ & \\
 
 \hline
 
\multirow{4}{*}{$k \geq 6$} & $\Self$ no self-, $\Multi$ no multi-, & \multirow{4}{*}{$\frac{2(k+1)}{k(k-1)}\cdot (n-1)$} & \multirow{4}{*}{\shortstack{\cref{fig:k-planar-examples}\\
 (right),\\ \cref{fig:tight-SIM-k7}}} \\
 & and $\Inc$ no incident crossings &  & \\
 
 \cline{2-2}
  
 & $\Self$ no self-, $\Multi$ no multi-, $\Inc$ no incident &  &  \\
 & crossings, and $\Homoto$ homotopy-free & & \\

\hline
 \end{tabular}
 \caption{Overview of results (see also \cref{thm:results}): The minimum number of edges of saturated $k$-planar drawings on $n$ vertices of a drawing style defined by a set of restrictions.}
 \label{tab:results}
\end{table}

\footnotetext{To attain the stated bound via these constructions, insert an isolated vertex in each empty cell.}

\subsection{Drawings, Crossing Restrictions, and Drawing Types}
\label{sec:prelim}

Throughout the paper, we consider topological drawings in the plane, that is, vertices are represented by distinct points in $\mathbb{R}^2$ and edges are represented by continuous curves connecting their respective endpoints.
We allow parallel edges but forbid loops.
As usual, edges do not pass through vertices, any two edges have only finitely many interior points in common, each of which is a proper crossing, and no three edges cross in a common point.
An edge may cross itself but it uses any crossing point at most twice.
Also, each of these selfcrossings are counted twice when considering the number of times that edge is crossed.

The \emph{planarization} of a drawing $D$ is the planar drawing obtained from $D$ by making each crossing into a new vertex, thereby subdividing the edges involved in the crossing.
Although we forbid loops in $D$ its planarization might have loops due to selfcrossing edges.
In a drawing, an edge involved in at least one crossing is a \emph{crossed} edge, while those involved in no crossing are the \emph{planar} or \emph{uncrossed} edges.
The \emph{cells} of a drawing are the connected components of the plane after the removal of every vertex and edge in $D$.
In other words, the cells of $D$ are the faces of its planarization.
A vertex $v$ is incident to a cell $c$ if $v$ is contained in the closure of~$c$, i.e., one could at least start drawing an uncrossed edge from $v$ into cell $c$.

Two distinct parallel edges $e$ and $f$ in a drawing $D$ are called \emph{homotopic}, if there is a homotopy of the sphere between $e$ and~$f$, that is, the curves of $e$ and $f$ can be continuously deformed into each other along the surface of the sphere while all vertices of $D$ are treated as holes.

In what follows, we investigate drawings that satisfy a specific set of restrictions, where we focus on those with frequent appearance in the literature:
\begin{itemize}
    \item \emph{$k$-planar}: Each edge is crossed at most $k$ times.
	
	\item \Homoto\ \emph{homotopy-free}: No two distinct parallel edges are homotopic.

	\item \Multi\ \emph{single-crossing}: Any pair of edges crosses at most once and any edge crosses itself at most once (edges with $t\in\{0,1,2\}$ common endpoints have at most $t+1$ common points).
	
	\item \Inc\ \emph{locally starlike}\footnote{In other papers this is also called \emph{star simple} or \emph{semi simple}~\cite{BFK15_crossing,starsimple20} and may not allow selfcrossing edges.}: Incident edges do not cross (selfcrossing edges are allowed).
	
	\item \Self\ \emph{selfcrossing-free}: No edge crosses itself.

	\item \emph{branching}: The drawing is $\Multi$ single-crossing, $\Inc$ locally starlike, $\Self$ selfcrossing-free, and $\Homoto$~homotopy-free.
\end{itemize}

A \emph{drawing style} is just a class $\Gamma$ of drawings, i.e., a predicate whether any given drawing $D$ is in $\Gamma$ or not.
A drawing style $\Gamma$ is \emph{monotone} if removing any edge or vertex from any drawing $D \in \Gamma$ results again in a drawing $D' \in \Gamma$, i.e., $\Gamma$ is closed under edge/vertex removal.

We consider drawing styles given by all $k$-planar drawings of finite, loopless multigraphs obeying a subset $X$ of the restrictions above.  
Such a drawing style is denoted by~$\Gamma_X$.
We focus on the restrictions \Multi\ forbidding multicrossings, \Self\ forbidding selfcrossings, \Inc\ forbidding incident crossings, and \Homoto\ forbidding homotopic edges.
Note that the $k$-planar drawing style is monotone, and so is $\Gamma_X$ for each $X \subseteq \{\Self,\Inc,\Multi\}$.
However, the style of all homotopy-free drawings is not monotone, as removing a vertex may render two edges homotopic.

We are interested in $k$-planar drawings in $\Gamma_X$ to which no further edge can be added without either violating $k$-planarity or any of the restrictions in $X$, and particularly in how sparse these drawings can be; namely, the sparsest saturated such drawings.

\begin{definition}
 A drawing $D$ is \emph{$\Gamma$-saturated} for drawing style $\Gamma$ if $D \in \Gamma$ and the addition of any new edge to $D$ results in a drawing $D' \notin \Gamma$.
\end{definition}

\subsection{Our Results}
In order to determine the sparsest $k$-planar $\Gamma_X$-saturated drawings for restrictions in $X$, we introduce in \cref{sec:lower-bounds-general} the concept of \emph{filled} drawings in general monotone drawing styles and give lower bounds on the number of edges in these.
Using the lower bounds for filled drawings and constructing particularly sparse $\Gamma_X$-saturated drawings, we then give in \cref{sec:exact-bounds} the precise answer for all $X \subseteq \{\Self,\Inc,\Multi\}$ and for the branching style, i.e., $X = \{\Self,\Inc,\Multi,\Homoto\}$, leaving open only a few cases for $k \in \{4,5,6\}$.
Our results for multigraphs are summarized in \cref{tab:results} and formalized in \cref{thm:results}.
In \cref{sec:simplegraphs} we discuss saturated drawings of simple graphs instead of multigraphs.
Finally, in \cref{sec:conclusions} we discuss further extensions.

{\bigskip\noindent\itshape Proofs of statements marked with {\normalfont($\star$)} can be found in the appendix.}

\section{Lower Bounds and Filled Drawings}
\label{sec:lower-bounds-general}
\latertitle{Omitted Proofs of \cref{sec:lower-bounds-general}}

Throughout this section, let $\Gamma$ be an arbitrary monotone drawing style; not necessarily $k$-planar or defined by any of the restrictions in \cref{sec:prelim}.
Recall that $\Gamma$ is monotone if 
it is closed under the removal of vertices and/or edges.

\begin{definition}\label{def:filled}
 A drawing $D$ is \emph{filled} if any two distinct vertices that are incident to the same cell $c$ of $D$ are connected by an uncrossed edge that lies completely in the boundary of $c$.
\end{definition}

For example, the filled crossing-free homotopy-free drawings are exactly the planar drawings of loopless multigraphs with every face bounded by three edges.
Using Euler's formula, such drawings on $n \geq 3$ vertices have exactly $m = 3n-6$ edges.
In this section we derive lower bounds on the number of edges in $n$-vertex filled drawings in drawing style $\Gamma$.
Another important example of filled drawings are those in which every cell has at most one incident vertex.
Note that every cell in a filled drawing has at most three incident vertices.
Generally, for a drawing $D$ we use the following notation:

\noindent
\begin{minipage}{0.23\textwidth}
 \begin{align*}
  n_D &= \#\text{ vertices}\\
  m_D &= \#\text{ edges} 
 \end{align*}
\end{minipage}
\begin{minipage}{0.77\textwidth}
 \begin{align*}
  c_i(D) &= \#\text{ cells with exactly $i$ incident vertices, $i \geq 0$} \\ 
  c'_2(D) &= \#\text{ cells with 2 uncrossed edges in their boundary} 
 \end{align*}
\end{minipage}
\smallskip

For a drawing $D$, let $G$ be its graph and $P$ be its planarization.
A \emph{component} of $D$ is a connected component of $P$.
A \emph{cut-vertex} of $D$ is a cut-vertex of $G$ that is also a cut-vertex of $P$.
And finally, $D$ is \emph{essentially $2$-connected} if 
 one component has at least one edge, all other components are isolated vertices and along the boundary of each cell each vertex appears at most once (that is, $D$ has no cut-vertex).
This means that for each simple closed curve that intersects $D$ in exactly one vertex or not at all, either the interior or the exterior contains no edges from $D$.

\both{\begin{lemma}[$\star$]\label{lem:wlog-2-connected}
 For every monotone drawing style $\Gamma$ and every filled drawing $D \in \Gamma$ we have 
$  m_D \geq \alpha_\Gamma \cdot (n_D+c_0(D)-1)$  where
\[
  \alpha_\Gamma = \min\left\{ \frac{m_{D'}}{n_{D'}+c_0(D')-1} \colon D' \in \Gamma \text{ is filled and essentially $2$-connected }\right\}.
\]
\end{lemma}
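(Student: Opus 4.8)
The plan is to prove the inequality by induction on the number of edges $m_D$, peeling a general filled drawing apart into essentially $2$-connected pieces. For the base cases, note first that if $D$ is itself essentially $2$-connected, then it has an edge, so $n_D \geq 2$ and $n_D + c_0(D) - 1 \geq 1 > 0$; the bound $m_D \geq \alpha_\Gamma\,(n_D + c_0(D) - 1)$ is then immediate since $\alpha_\Gamma$ is by definition a minimum over exactly such drawings. If instead $D$ has no edge, then filledness forces $n_D \leq 1$ (otherwise two vertices would be incident to the single cell and would need a connecting edge), and in both of the remaining possibilities one computes $n_D + c_0(D) - 1 = 0$, so the inequality holds trivially.

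For the inductive step, assume $D$ has an edge but is not essentially $2$-connected. By the characterization of essential $2$-connectivity noted above, there is a simple closed curve $\gamma$ meeting $D$ in at most one vertex whose interior and exterior both contain edges of $D$. Since $\gamma$ avoids all edges and passes through at most one vertex, the curve with that vertex deleted is connected and disjoint from $D$, hence lies in a single cell $c^*$ of $D$. Cutting along $\gamma$ yields the interior drawing $D_1$ and the exterior drawing $D_2$; each is obtained from $D$ by deleting the vertices and edges on the opposite side, so $D_1, D_2 \in \Gamma$ by monotonicity, and each has strictly fewer edges than $D$ because the opposite side carries at least one edge. Thus the induction hypothesis applies to both.

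The heart of the argument is then twofold. First, $D_1$ and $D_2$ are again filled: every cell except the newly created outer cell is a cell of $D$ with the same incident vertices (so the uncrossed edge from filledness of $D$ survives on the correct side), while for two vertices incident to the new outer cell---both incident to $c^*$ in $D$---the uncrossed edge guaranteed on $\partial c^*$ is disjoint from $\gamma$ except possibly at the shared vertex and hence remains on that side. Second, the potential is additive: always $m_D = m_{D_1} + m_{D_2}$, and I claim $n_D + c_0(D) - 1 = (n_{D_1} + c_0(D_1) - 1) + (n_{D_2} + c_0(D_2) - 1)$. When $\gamma$ passes through a vertex $v$ the pieces share only $v$, so $n_D = n_{D_1} + n_{D_2} - 1$, and since $c^*$ and both of its parts stay incident to $v$ we get $c_0(D) = c_0(D_1) + c_0(D_2)$. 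When $\gamma$ meets no vertex the pieces are disjoint, $n_D = n_{D_1} + n_{D_2}$, and $c^*$ splits into two cells; here filledness is decisive, because a vertex incident to $c^*$ on each side of $\gamma$ would have to be joined across $\gamma$ by an edge, which is impossible, so at least one new cell is vertex-free and $c_0(D_1) + c_0(D_2) = c_0(D) + 1$. Either way the additivity holds, and combining it with the induction hypothesis gives $m_D = m_{D_1} + m_{D_2} \geq \alpha_\Gamma\bigl((n_{D_1}+c_0(D_1)-1)+(n_{D_2}+c_0(D_2)-1)\bigr) = \alpha_\Gamma\,(n_D + c_0(D) - 1)$.

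The main obstacle is precisely this cell bookkeeping in the inductive step: checking that the two split drawings remain filled, and---above all---pinning down exactly how $c_0$ changes under the cut. The subtle point is that filledness is what rules out the configuration in the disconnected case where $c_0$ would fail to increase by one; without it the additivity of the potential $n_D + c_0(D) - 1$, and hence the entire induction, would collapse.
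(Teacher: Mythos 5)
Your proof is correct and takes essentially the same route as the paper's: split along a simple closed curve meeting $D$ in at most one vertex, check that both pieces remain filled and in $\Gamma$ by monotonicity, and verify the additivity of the potential $n_D+c_0(D)-1$, with filledness supplying exactly the fact that $c_0$ increases by one in the disconnected case. The only (immaterial) difference is that you induct on the number of edges while the paper inducts on the number of vertices.
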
}
\later{\begin{proof}
 We proceed by induction on the number $n_D$ of vertices in $D$.
 The desired inequality $m_D \geq \alpha_\Gamma(n_D + c_0(D)-1)$ clearly holds if $D$ itself is essentially $2$-connected.
 Otherwise $D$ has a cut vertex or two components with an edge.
 In both cases we can choose a simple closed curve $C$ with at least one vertex of $D$ in its interior and at least one vertex in its exterior such that $D \cap C$ is either empty or a single vertex.
 Let $D'$ and $D''$ denote the drawings obtained from $D$ by removing every edge and vertex of $D$ in the exterior of $C$, respectively interior of $C$.
 Observe that $D',D''$ are filled and in $\Gamma$, as $\Gamma$ is monotone.
 Further observe that $m_{D'} + m_{D''} = m_D$, as every edge of $D$ lies on one side of $C$.
 
 Now if $C \cap D \neq \emptyset$, then $C \cap D$ consists of exactly one vertex and $n_{D'} + n_{D''} = n_D + 1$.
 Moreover $c_0(D') + c_0(D'') = c_0(D)$, since the vertex in $C \cap D$ is incident to the cell containing curve $C$ in both drawings $D'$ and $D''$.
 Hence, using induction on $D'$ and $D''$ we conclude
 \begin{align*}
  m_D &= m_{D'} + m_{D''} \geq \alpha_\Gamma(n_{D'}+c_0(D')-1) + \alpha_\Gamma(n_{D''} + c_0(D'') -1)\\
  &= \alpha_\Gamma(n_{D'}+n_{D''} - 1 + c_0(D') + c_0(D'') - 1) = \alpha_\Gamma(n_D + c_0(D) - 1).
 \end{align*}

 On the other hand, if $C \cap D = \emptyset$, then $n_{D'} + n_{D''} = n_D$.
 Moreover $c_0(D') + c_0(D'') = c_0(D) + 1$, since the cell of $D$ containing curve $C$ can have incident vertices only on one side of $C$, as the drawing is filled.
 Similar as before, we conclude
 \begin{align*}
  m_D &= m_{D'} + m_{D''} \geq \alpha_\Gamma(n_{D'}+c_0(D')-1) + \alpha_\Gamma(n_{D''} + c_0(D'') -1)\\
  &= \alpha_\Gamma(n_{D'}+n_{D''} + c_0(D') + c_0(D'') - 1 - 1) \geq \alpha_\Gamma(n_D + c_0(D) - 1).\qedhere
 \end{align*}
\end{proof}}

As suggested by \cref{lem:wlog-2-connected}, we  shall now focus on filled drawings that are essentially $2$-connected.
Our goal is to determine the parameter $\alpha_\Gamma$.
First, we give an exact formula for the number of edges in any filled essentially $2$-connected drawing.
The parameter $k$ in the following lemma will later be the $k$ for the $k$-planar drawings in \cref{sec:exact-bounds}.
However, we do not require any drawing to be $k$-planar here.

\both{\begin{lemma}[$\star$]\label{lem:2-connected-count}
 For any $k > 2$, if $D$ is a filled, essentially $2$-connected drawing with $n_D \geq 3$ vertices, then
  $ m_D = \tfrac{2}{k-2}(n_D + c_0(D) - 2 + \varepsilon(D)),  \mbox{ where}$
 \begin{align*}
  \varepsilon(D) &= (\tfrac{k}{2}m_{\rm x} - \mathrm{cr}) + \tfrac{k-4}{4}m_{\rm p} + c'_2 + c_3, \qquad \mbox{such that}\\
  m_{\rm p} &= \#\text{planar edges}, \qquad  {\rm cr} = \#\text{crossings}, \mbox{and} \qquad  m_{\rm x} = \#\text{crossed edges}.
 \end{align*}%
\end{lemma}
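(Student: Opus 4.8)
The plan is to do all the counting on the planarization $P$ of $D$ and combine Euler's formula with two incidence counts. First I would record the basic parameters of $P$: its vertices are the $n_D$ original vertices together with one vertex per crossing, so $|V(P)| = n_D + \mathrm{cr}$; since an edge crossed $t$ times (a selfcrossing counted twice) is split into $t+1$ arcs and $\sum_e t_e = 2\,\mathrm{cr}$, the edges number $|E(P)| = m_D + 2\,\mathrm{cr}$; and the faces of $P$ are exactly the cells of $D$, so $|F(P)| = c_0 + c_1 + c_2 + c_3$ by the already-noted fact that a filled cell has at most three incident vertices. I may assume $P$ is connected: an isolated vertex of $D$ must lie inside a cell having no other incident vertex (filledness forbids joining it to anything), i.e.\ inside a would-be $c_0$-cell, so deleting it lowers both $n_D$ and $c_0(D)$ by one while leaving every other quantity unchanged. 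Euler's formula then reads
\[ c_0 + c_1 + c_2 + c_3 = m_D + \mathrm{cr} - n_D + 2. \]

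Next I would set up two double counts, both powered by essential $2$-connectivity and filledness. Essential $2$-connectivity guarantees that each cell boundary is a simple closed walk on which every incident vertex occurs exactly once; hence the number of corners of a cell at original vertices equals its number of incident vertices, and summing over cells gives
\[ c_1 + 2c_2 + 3c_3 = \sum_{v} \deg_G(v) = 2m_D. \]
For the second count I would classify the uncrossed edges on a cell boundary by cell type. A cell with zero or one incident vertex carries no uncrossed edge; a cell with three incident vertices is, by filledness, a triangle bounded by exactly three uncrossed edges; and a cell with two incident vertices carries exactly one uncrossed edge unless it is a bigon bounded by two parallel uncrossed edges, which is precisely a $c'_2$-cell. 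Since essential $2$-connectivity excludes bridges, every uncrossed edge borders two distinct cells, so counting incidences between uncrossed edges and cells yields
\[ c_2 + c'_2 + 3c_3 = 2 m_{\mathrm p}. \]

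I would then eliminate $c_1$ and $c_2$: the last identity gives $c_2$, substituting into the corner count gives $c_1$, and plugging both into Euler collapses everything to the single $k$-free identity
\[ m_{\mathrm p} - m_{\mathrm x} = n_D + c_0(D) - 2 - \mathrm{cr} + c'_2 + c_3, \]
which one can already sanity-check on a triangle. Finally, writing $m_D = m_{\mathrm p} + m_{\mathrm x}$ and solving for $m_D$, I would regroup the right-hand side so that the crossing contribution appears as the deficiency $\tfrac{k}{2}m_{\mathrm x} - \mathrm{cr} = \tfrac12\sum_{\text{crossed }e}(k - t_e)$ and the planar-edge contribution is absorbed into the coefficient of $m_{\mathrm p}$; this is exactly the definition of $\varepsilon(D)$, and it produces $m_D = \tfrac{2}{k-2}\bigl(n_D + c_0(D) - 2 + \varepsilon(D)\bigr)$.

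The main obstacle I anticipate is the structural bookkeeping in the two incidence counts, not the algebra. One must justify, from filledness together with essential $2$-connectivity, that every cell boundary is a simple closed walk (no repeated vertex, no bridge), that a three-vertex cell is a clean triangle with no stray crossed arcs, and that the two-uncrossed-edge cells are exactly the bigons counted by $c'_2$. Handling the degenerate small configurations (e.g.\ verifying that the bridge-carrying single-edge drawing cannot arise once $n_D \ge 3$ and filledness are imposed) and the placement of isolated vertices in $c_0$-cells also needs care; but once these structural facts are pinned down, the three relations and their combination are routine.
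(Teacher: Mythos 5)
Your argument is essentially the paper's proof: planarize, apply Euler's formula, and combine it with the same two incidence counts (corners at original vertices giving $c_1+2c_2+3c_3=2m_D$, and sides of uncrossed edges giving $c_2+c'_2+3c_3=2m_{\rm p}$), followed by the same algebraic regrouping into $\varepsilon(D)$. The only real divergence is the treatment of isolated vertices: the paper keeps them and carries a $\#\mathrm{isolated}$ term through both the angle count and Euler's formula (the planarization has $1+\#\mathrm{isolated}$ components), whereas you remove them up front. That reduction is workable, but your justification of it contains a concrete error: deleting an isolated vertex lowers $n_D$ by one and \emph{raises} $c_0(D)$ by one (the cell that had exactly one incident vertex becomes a cell with none), so it is the sum $n_D+c_0(D)$ that is invariant; as written, with both quantities dropping, the right-hand side of the claimed identity would change while $m_D$ does not, and the reduction would not preserve the formula. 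You should also note that after discarding isolated vertices the hypothesis $n_D\geq 3$ may fail, so the two incidence counts (in particular that every uncrossed edge borders two distinct cells) need to be checked to survive in the two-vertex case --- this is exactly where the paper invokes ``no cut-vertices and $n_D\geq 3$'' rather than the blanket claim that essential $2$-connectivity excludes bridges.
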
}
\later{\begin{proof}
 First observe that, since $D$ is filled, no cell has four or more incident vertices.
 Hence, $\#{\rm cells} =c_0+c_1+c_2+c_3$.
 By counting along the angles around each vertex, we see that
 \begin{equation}
  \#{\rm isolated} + 2m_D = \#{\rm isolated} + \sum_{v} \deg(v) = c_1 + 2c_2 + 3c_3.\label{eq-tight:cells-by-angle}
 \end{equation}
 Note that this relies on the assumption that $D$ is essentially $2$-connected, as this guarantees that each non-isolated vertex $v$ lies on the boundary of exactly $\deg(v)$ cells. 

 As $D$ is filled, each cell with exactly two vertices on its boundary is incident to either one or two planar edges and each cell with three vertices on its boundary is incident to exactly three planar edges.
 Moreover, each planar edge is contained in the boundary of exactly two distinct such cells since $D$ has no cut-vertices and $n_D \geq 3$.
 By counting along the sides of the planar edges, we see that
  $2m_{\rm p} = c_2 + c'_2 + 3c_3,\notag$
 which together with~\eqref{eq-tight:cells-by-angle} gives
 \begin{equation}
  \#{\rm isolated} + 2m_{\rm x} = c_1+c_2-c'_2. 
  \label{eq-tight:cell-count}
 \end{equation}

 Consider the planarization $P$ of $D$.
 Since $D$ is essentially $2$-connected, $P$ has exactly $(1+\#{\rm isolated})$ many connected components.
 Moreover we have
 \begin{equation}
  |V(P)|= n_D+\mathrm{cr} \quad\text{and}\quad |E(P)| = m_D + 2\mathrm{cr} \quad\text{and}\quad \#{\rm cells}=c_0 + c_1 + c_2 + c_3.\label{eq-tight:segment-count}
 \end{equation}
 Applying Euler's formula to $P$ we have
 \begin{align*}
  2 &= |V(P)| - |E(P)| + \#{\rm cells} - \#{\rm isolated} \\
  &\overset{\eqref{eq-tight:segment-count}}{=} \mathrm{cr} + n_D - m_D - 2\mathrm{cr} + c_0 + c_1 + c_2 + c_3 - \#{\rm isolated}\\
  &\overset{\eqref{eq-tight:cell-count}}{=} n_D - m_D - \mathrm{cr} + 2m_{\rm x} + \#{\rm isolated} + c'_2 + c_0 + c_3 - \#{\rm isolated} \\
  &= n_D + m_{\rm x} - m_{\rm p} - \mathrm{cr} + c'_2 + c_0 + c_3 \\
  &= n_D+\tfrac{2-k}{2}(m_{\rm x} + m_{\rm p}) + \tfrac{k-4}{2}m_{\rm p} + (\tfrac{k}{2}m_{\rm x}-\mathrm{cr}) + c'_2 + c_0 + c_3.%
 \end{align*}

 Solving for $m_D$ we have: 
 \[
  m_D = \frac{2}{k-2} \cdot \left(n_D+c_0-2 + (\tfrac{k}{2}m_{\rm x}-\mathrm{cr}) + \tfrac{k-4}{4} m_{\rm p} + c'_2 + c_3\right).\qedhere
 \]
\end{proof}}

\cref{lem:wlog-2-connected,lem:2-connected-count} together imply that for any filled drawing $D \in \Gamma$ we have
\begin{align*}
 \frac{m_D}{n_D-1} \geq \frac{m_D}{n_D+c_0(D)-1} &\geq \min_{D'} \frac{m_{D'}}{n_{D'}+c_0(D')-1}\\
 &= \min_{D'} \frac{2}{k-2} \cdot \frac{n_{D'}+c_0(D')-2 + \varepsilon(D')}{n_{D'}+c_0(D')-1 \hspace{3.5em}},
\end{align*}
where both minima are taken over all filled, essentially $2$-connected drawings $D' \in \Gamma$ and $\varepsilon(D')$ can be thought of as an error term for the drawing $D'$, which we seek to minimize.
Indeed, if $D'$ is $k$-planar, i.e., each edge is crossed at most $k$ times, then $2\mathrm{cr} \leq km_{\rm x}$.
Thus for $k \geq 4$ we have $\varepsilon(D') \geq 0$.
In the next section we shall see that (in many cases) the minimum is indeed attained by drawings $D'$ with $\varepsilon(D') = 0$.

\section{Exact Bounds and Saturated Drawings}
\label{sec:exact-bounds}
\latertitle{Omitted Proofs of \cref{sec:exact-bounds}\label{apx:results}}

Recall that we seek to find the sparsest $k$-planar, $\Gamma_X$-saturated drawings in a drawing style $\Gamma_X$ that is given by a set $X \subseteq \{\Self,\Inc,\Multi,\Homoto\}$ of additional restrictions.
These $\Gamma_X$-saturated drawings are related to the filled drawings from \cref{sec:lower-bounds-general}.

\both{\begin{lemma}[$\star$]\label{lem:saturated-are-filled}
 For any $k \geq 0$ and any $X \subseteq \{\Self,\Inc,\Multi\}$, as well as for $X = \{\Self,\Inc,\Multi,\Homoto\}$, every $k$-planar, $\Gamma_X$-saturated drawing is filled.
\end{lemma}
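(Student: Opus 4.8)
The plan is to argue by contradiction. Suppose some $k$-planar, $\Gamma_X$-saturated drawing $D$ fails to be filled. Then by the definition of filled there are two distinct vertices $u,v$ that are both incident to a common cell $c$ but are not joined by an uncrossed edge lying in the boundary of $c$. Since $u$ and $v$ are incident to $c$, I would route a simple arc $f$ from $u$ to $v$ through the interior of $c$. As $c$ is a face of the planarization, $f$ can be made to meet the rest of $D$ only in its two endpoints, so $f$ is uncrossed. The whole proof then amounts to showing that $D+f$ still belongs to $\Gamma_X$, which contradicts saturation.

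For $X \subseteq \{\Self,\Inc,\Multi\}$ this is immediate: the new arc $f$ participates in no crossing, so $D+f$ is still $k$-planar for every $k \geq 0$; drawn as a simple arc it respects $\Self$; since it crosses nothing it cannot cross an incident edge, so it respects $\Inc$; and crossing nothing (and not itself) it respects $\Multi$. Because loops are the only forbidden multiplicity and $u \neq v$, the arc $f$ is a legal new edge, so $D+f \in \Gamma_X$, contradicting that $D$ is saturated. This disposes of every case not involving $\Homoto$.

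The interesting case, and the main obstacle, is the branching style $X = \{\Self,\Inc,\Multi,\Homoto\}$. The arc $f$ above still satisfies $k$-planarity, $\Self$, $\Inc$, and $\Multi$, so the only way $D+f$ can fail to be branching is that $f$ is homotopic to some already present edge $e$; since homotopic edges are parallel, $e$ has endpoints $u,v$. Then $e$ and $f$ bound a disk $\Delta$ on the sphere whose interior contains no vertex of $D$, which is precisely the meaning of two arcs with common endpoints being homotopic; note that $e\cup f$ is a simple closed curve because $f$ is uncrossed and $e$ does not selfcross by $\Self$. The heart of the argument is to show that the interior of $\Delta$ is in fact free of \emph{every} part of $D$. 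I would assume some edge $g$ has a sub-arc $g'$ running through the interior of $\Delta$ and track how $g'$ can reach the boundary $e\cup f$: it cannot meet the interior of $f$ (as $f$ is uncrossed); it cannot meet the interior of $e$ in two points (as $g$ and $e$ cross at most once by $\Multi$); and if $g'$ emanated from $u$ or $v$ it would have to leave $\Delta$ across $e$ or $f$, which is impossible since it cannot cross the uncrossed $f$ and, being incident to $e$, cannot cross $e$ by $\Inc$. Hence no such $g'$ exists. This is exactly the step where $\Multi$ and $\Inc$ must be combined, and it is where I expect the real work to lie.

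Once the interior of $\Delta$ is known to be empty, the conclusion follows quickly: the interior of $\Delta$ then lies in a single cell of $D$, and since $f$ borders it and lies in $c$, that cell is $c$, so $\Delta \subseteq \overline{c}$ and $e$ lies on the boundary of $c$. Moreover $e$ must itself be uncrossed, because any crossing on $e$ would push part of the crossing edge into the empty interior of $\Delta$. Thus $e$ is an uncrossed $u$--$v$ edge contained in the boundary of $c$, contradicting the choice of $u,v,c$. Therefore $D$ is filled in the branching case as well, completing the proof.
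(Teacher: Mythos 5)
Your first paragraph coincides with the paper's proof for $X \subseteq \{\Self,\Inc,\Multi\}$: insert an uncrossed edge $f=uv$ inside the cell $c$; it participates in no crossing, so none of these restrictions (nor $k$-planarity, for any $k\geq 0$) can be violated, contradicting saturation. The only substantive content of the lemma is the branching case, and there your case analysis has a gap precisely at the step you identify as the heart of the argument. When you track how a maximal sub-arc $g'$ of an edge $g$ lying in the interior of $\Delta$ meets $\partial\Delta = e\cup f$, each of its two ends is either a crossing with $e$ (not with $f$, which is uncrossed) or one of the vertices $u,v$. You dispose of the (crossing, crossing) case via $\Multi$ and of the (vertex, crossing) case via $\Inc$, but the (vertex, vertex) case is dismissed with the claim that $g'$ ``would have to leave $\Delta$,'' which is false: $g$ could be a further parallel $u$--$v$ edge drawn entirely inside $\Delta$, never touching $e$ or $f$ except at $u$ and $v$. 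Nothing in $\{\Self,\Inc,\Multi,\Homoto\}$ forbids several pairwise non-homotopic parallel edges, so this configuration cannot be excluded a priori; as written, the conclusion that the interior of $\Delta$ is free of every part of $D$ does not follow, and the subsequent deduction that $e$ is uncrossed and bounds $c$ collapses if such a $g$ separates $e$ from $f$.

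The gap is repairable within your framework. Among all edges homotopic to $f$, choose $e$ so that the vertex-free disk $\Delta$ is inclusion-minimal: a parallel $u$--$v$ edge $g$ inside $\Delta$ would bound, together with $f$, a strictly smaller vertex-free disk, so $g$ would again be homotopic to $f$, contradicting minimality. (Alternatively, note that by your other two cases every edge meeting the interior of $\Delta$ must be such a parallel $u$--$v$ edge; these are pairwise incident, hence pairwise non-crossing by $\Inc$, hence all uncrossed, and the one nearest to $f$ is then an uncrossed $u$--$v$ edge on the boundary of $c$, contradicting the choice of $u,v,c$ directly.) For comparison, the paper avoids the disk analysis: it takes the parallel edge closest to the inserted edge in the cyclic order at $u$ and exhibits vertices on both sides of the resulting closed curve $C$---if that edge is uncrossed it cannot lie on the boundary of $c$, and $\Inc$ then yields edges from $u$ ending on either side of $C$; if it is crossed by some $e''$, then $\Inc$ and $\Multi$ force $e''$ to cross $C$ exactly once, placing its endpoints on opposite sides. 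Both routes use $\Self$, $\Inc$, and $\Multi$ in the same roles; with the minimality fix above, yours is also correct.
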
}
\later{\begin{proof}
 Consider a $k$-planar, drawing $D \in \Gamma_X$ and a cell $c$ in $D$ with two incident vertices $u,v$, such that $u,v$ are not connected by an uncrossed edge in the boundary of $c$.
 That is, $D$ is not filled and we shall show that it is not $\Gamma_X$-saturated.
 We add a new uncrossed edge $e = uv$ in that cell, resulting in a new drawing $D'$.
 Clearly, the introduction of $e$ did not create any new selfcrossings, incident crossings, multicrossings, or edges being crossed more than $k$ times.
 Hence, for $X \subseteq \{\Self,\Inc,\Multi\}$, drawing $D'$ lies in $\Gamma_X$ and $D$ was not $\Gamma_X$-saturated.
 
 It remains to consider $X = \{\Self,\Inc,\Multi,\Homoto\}$ and rule out that $e$ is homotopic to another edge in order to show that $D'\in\Gamma_X$.
 So let $e'$ be an edge parallel to $e$ which is closest to $e$ in the cyclic order of edges incident to $u$.
 Since incident crossings and selfcrossings are forbidden, $e$ and $e'$ together form a simple closed curve $C$.
 If $e'$ is uncrossed, then $e'$ is not in the boundary of cell $c$.
 Since incident crossings are forbidden we find edges $f$ and $f'$ connecting $u$ to a vertex in the interior and a vertex in the exterior of $C$, respectively.
 Hence $e$ and $e'$ are not homotopic.
 On the other hand, suppose that $e'$ is crossed by some edge $e''$.
 As incident crossings are forbidden, neither $u$ nor $v$ is an endpoint of $e''$.
 As multicrossings are forbidden, the two endpoints of $e''$ lie in the exterior and the interior of $C$, respectively.
 Hence $e$ and $e'$ are not homotopic.
\end{proof}}

In order to determine the exact edge-counts for min-saturated drawings, we shall find for each drawing style some essentially $2$-connected, $\Gamma_X$-saturated drawings that attain the minimum in \cref{lem:wlog-2-connected}.
Motivated by the error term $\varepsilon(D) = (\frac{k}{2}m_{\rm x} - \mathrm{cr}) + \frac{k-4}{4}m_{\rm p} + c'_2 + c_3$ in \cref{lem:2-connected-count}, we define \emph{tight drawings} as those $k$-planar drawings in which \textbf{1)} every edge is crossed exactly $k$ times (so $\frac{k}{2}m_{\rm x} = \mathrm{cr}$) and \textbf{2)} every cell contains exactly one vertex (so $m_{\rm p}=c_0=c'_2=c_3=0$).
Observe that tight drawings are indeed $\Gamma_X$-saturated and filled and exist only in case $k\geq 4$.
Note that, to aid readability, isolated vertices are omitted from the drawings in the figures.
Namely, the actual drawings have one isolated vertex in each cell shown empty in the figures.
This is also mentioned in the figure captions.

\both{\begin{lemma}[$\star$]\label{lem:bound-from-tight}
 For every $k \geq 4$ and every monotone drawing style $\Gamma$ of $k$-planar drawings, if $D\in\Gamma$ is a tight drawing, then $\alpha_\Gamma \leq \frac{2}{k-2} \cdot \frac{n_D-2}{n_D-1} < 1$.
\end{lemma}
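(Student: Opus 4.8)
The plan is to use the filled drawing $D$ directly as a certificate in \cref{lem:wlog-2-connected}, rather than trying to place it inside the minimum that defines $\alpha_\Gamma$. First I would record the consequences of tightness: every cell of $D$ has exactly one incident vertex, so $c_0(D)=0$, and since no cell contains two distinct vertices, $D$ is \emph{filled} vacuously. Moreover $m_{\rm p}=c'_2=c_3=0$ and every edge is crossed exactly $k$ times, so $\mathrm{cr}=\tfrac{k}{2}m_D$ and $m_{\rm x}=m_D$, i.e.\ the error term $\varepsilon(D)$ from \cref{lem:2-connected-count} vanishes. Because $D$ is filled, \cref{lem:wlog-2-connected} applies verbatim and gives $m_D\ge \alpha_\Gamma\,(n_D+c_0(D)-1)=\alpha_\Gamma\,(n_D-1)$, hence $\alpha_\Gamma\le m_D/(n_D-1)$. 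It therefore suffices to prove the edge bound $m_D\le \tfrac{2}{k-2}(n_D-2)$.

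For the edge bound I would count on the planarization $P$ of $D$. If $D$ happens to be essentially $2$-connected, then \cref{lem:2-connected-count} gives this with equality, since $\varepsilon(D)=c_0(D)=0$. In general I would run Euler's formula on $P$ directly, allowing several components and isolated vertices: with $|V(P)|=n_D+\mathrm{cr}$, $|E(P)|=m_D+2\mathrm{cr}$ and $\mathrm{cr}=\tfrac{k}{2}m_D$, Euler expresses the number of cells as $\#\mathrm{cells}=1+\#\mathrm{components}-n_D+\tfrac{k+2}{2}m_D$. Independently, since each cell has exactly one incident vertex, the angular sectors around the non-isolated vertices total $\sum_v\deg(v)=2m_D$ and are distributed among the cells incident to a non-isolated vertex, each such cell using at least one sector, while each isolated vertex accounts for exactly one further cell; hence $\#\mathrm{cells}\le \#\mathrm{isolated}+2m_D$. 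Comparing the two expressions, writing $\#\mathrm{components}$ as $\#\mathrm{isolated}$ plus the number $t\ge 1$ of nontrivial components, and dividing by the negative coefficient $\tfrac{2-k}{2}$ (legal since $k>2$) yields $m_D\le \tfrac{2}{k-2}(n_D-1-t)\le\tfrac{2}{k-2}(n_D-2)$, with equality exactly when $D$ has a single nontrivial component and no cut-vertex.

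Combining the two facts gives $\alpha_\Gamma\le m_D/(n_D-1)\le \tfrac{2}{k-2}\cdot\tfrac{n_D-2}{n_D-1}$, and the strict inequality $\tfrac{2}{k-2}\cdot\tfrac{n_D-2}{n_D-1}<1$ follows because $\tfrac{2}{k-2}\le 1$ for $k\ge 4$ while $\tfrac{n_D-2}{n_D-1}<1$ for $n_D\ge 3$. The main obstacle I anticipate is precisely this essential $2$-connectivity issue: a tight drawing need not be essentially $2$-connected, both because its nontrivial skeleton may contain cut-vertices and because the isolated vertices that fill the formerly empty cells create extra components, so one cannot simply invoke \cref{lem:2-connected-count} or read $D$ off as a minimizer in \cref{lem:wlog-2-connected}. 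Routing the argument through the \emph{universal} inequality of \cref{lem:wlog-2-connected} together with the disconnection-aware Euler count above is what makes the bound survive for an arbitrary tight $D$; the only place that needs care is the bookkeeping of isolated vertices and multiple components in the cell count.
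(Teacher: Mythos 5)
Your proposal is correct, but it takes a genuinely different route from the paper. The paper first performs surgery: if $D$ is not essentially $2$-connected it repeatedly cuts along a separating closed curve, discarding one side (adding an isolated vertex if a cell becomes empty); monotonicity of $\Gamma$ and of the function $n \mapsto \frac{n-2}{n-1}$ reduce the claim to an essentially $2$-connected tight drawing $D_0$, which is then fed directly into the minimum defining $\alpha_\Gamma$ and evaluated via the exact formula of \cref{lem:2-connected-count} with $\varepsilon(D_0)=c_0(D_0)=0$. You instead keep $D$ intact: you observe that \cref{lem:wlog-2-connected} already converts \emph{any} filled drawing into an upper-bound certificate, $\alpha_\Gamma \le m_D/(n_D-1)$, and then you replace the appeal to \cref{lem:2-connected-count} by a disconnection-aware Euler count on the planarization, where the slack $t\ge 1$ from the nontrivial components and the inequality $\#\mathrm{cells}\le\#\mathrm{isolated}+2m_D$ (cut-vertices only help) deliver $m_D\le\tfrac{2}{k-2}(n_D-2)$ directly. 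Your route buys a proof with no surgery step and no need to check that tightness is preserved under cutting, at the cost of re-deriving a generalization of \cref{lem:2-connected-count} to non-$2$-connected tight drawings; the paper's route is shorter because it reuses that lemma verbatim. Your bookkeeping checks out: each cell of a tight drawing has exactly one incident vertex, the cells meeting isolated vertices are in bijection with those vertices, and each remaining cell consumes at least one angular sector at its unique incident vertex. One microscopic point: your final strictness argument invokes $n_D\ge 3$, which you do not justify (the paper derives it from Euler's formula applied to a $k\ge 4$ times crossed edge), but the inequality $\frac{n_D-2}{n_D-1}<1$ holds already for $n_D\ge 2$, which is immediate since $D$ has an edge and no loops.
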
}
\later{\begin{proof}
 If $D$ is not essentially $2$-connected, then there is a closed curve $C$ containing edges of $D$ in the interior as well as exterior, such that $C \cap D$ is either empty or a single vertex.
 Then the drawing obtained by removing everything inside $C$ (and adding an isolated vertex if the resulting cell is empty) is again in $\Gamma$ by monotonicity and again tight, but has fewer vertices.
 As $\frac{n-2}{n-1}$ is monotone increasing in $n$, it thus suffices to prove the claim for any essentially $2$-connected tight drawing $D_0$. 
 
 Clearly, $D_0$ is filled, as there are no two vertices incident to the same cell.
 We immediately get $c_0(D_0) = 0$, $m_{\rm p} = 0$, $c'_2 = c_3 = 0$, $2\mathrm{cr} = km_{\rm x}$, and it follows that $\varepsilon(D_0) = (\tfrac{k}{2}m_{\rm x} - \mathrm{cr}) + \tfrac{k-4}{4}m_{\rm p} + c'_2 + c_3 = 0$.
 As there is at least one edge in $D_0$ and this is crossed $k \geq 4$ times, Euler's formula implies that there are at least three cells.
 Hence $n_{D_0} \geq 3$ and \cref{lem:2-connected-count} gives
 \[
  \alpha_\Gamma \leq \frac{m_{D_0}}{n_{D_0}+c_0(D_0) - 1} = \frac{m_{D_0}}{n_{D_0}-1} = \frac{2}{k-2}\cdot \frac{n_{D_0} - 2}{n_{D_0} - 1} < \frac{2}{k-2} \leq 1.\qedhere
 \]
\end{proof}}

\both{\begin{theorem}[see also \cref{tab:results}]\label{thm:results}
 Let $k \geq 4$, $X \subseteq \{\Self,\Inc,\Multi,\Homoto\}$ be a set of restrictions, and $\Gamma = \Gamma_X$ be the corresponding drawing style of $k$-planar drawings. 
 
 For infinitely many values of $n$, the minimum number of edges in any $n$-vertex $\Gamma$-saturated drawing is 

\begin{center} \renewcommand{\arraystretch}{1.5}
\begin{tabular}{rl}
 $\frac{2}{k - (k \bmod 2)}(n-1)$ & \quad for $X = \{\Inc\}$ and $X = \emptyset$. \\
 $\frac{2}{k-1}(n-1)$ & \quad for $X = \{\Self\}$ and $X = \{\Self,\Inc\}$. \\  
  $\frac{2(k-1)}{(k-1)(k-2)+2}(n-1)$ & \quad for $X = \{\Multi\}$. \\
  $\frac{2(k+1)}{k(k-1)}(n-1)$ & \quad for $X = \{\Self,\Multi\}$. \\
  $\frac{4}{5}(n-1)$ & \quad for $X = \{\Inc,\Multi\}$ and $k=4$. \\  
  $\frac{2(k-1)}{(k-1)(k-2)+2}(n-1)$ & \quad for $X = \{\Inc,\Multi\}$ and $k\geq 5$. \\  
$\frac{2(k+1)}{k(k-1)}(n-1)$ & \quad for $X = \{\Self,\Inc,\Multi\}$ and $k \geq 6$. \\
$\frac{2(k+1)}{k(k-1)}(n-1)$ & \quad for $X = \{\Self,\Inc,\Multi,\Homoto\}$ and $k \geq 6$.
\end{tabular}
\end{center}
\end{theorem}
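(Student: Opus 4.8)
The plan is to pin down the constant $\alpha_{\Gamma_X}$ for each restriction set $X$ and to match it with a construction, the claimed bound being $\alpha_{\Gamma_X}(n-1)$. For the lower bound I would first use \cref{lem:saturated-are-filled} to guarantee that every $\Gamma_X$-saturated drawing is filled, and then apply \cref{lem:wlog-2-connected} to reduce to estimating $\alpha_{\Gamma_X}=\min_{D'} m_{D'}/(n_{D'}+c_0(D')-1)$ over essentially $2$-connected filled $D'\in\Gamma_X$. By \cref{lem:2-connected-count} this ratio equals $\frac{2}{k-2}\cdot\frac{N-1+\varepsilon(D')}{N}$ with $N=n_{D'}+c_0(D')-1$. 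Since $\varepsilon\geq\frac{k}{2}m_{\rm x}-\mathrm{cr}\geq 0$ for $k\geq 4$, for a fixed $N$ the ratio is minimized by tight drawings ($\varepsilon=0$), and among tight drawings $\frac{2}{k-2}\cdot\frac{N-1}{N}$ is increasing in $N$; hence the whole problem collapses to determining the smallest number of edges a tight, essentially $2$-connected drawing in $\Gamma_X$ can have.

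The crux is a restriction-specific crossing count. In a tight drawing every crossed edge is crossed exactly $k$ times, so $2\mathrm{cr}=km_{\rm x}$. When $\Multi\in X$ each pair of edges crosses at most once and each edge self-crosses at most once, so $\mathrm{cr}\leq\binom{m_{\rm x}}{2}$ if also $\Self\in X$ and $\mathrm{cr}\leq\binom{m_{\rm x}}{2}+m_{\rm x}$ otherwise; together with $2\mathrm{cr}=km_{\rm x}$ this forces $m_{\rm x}\geq k+1$ respectively $m_{\rm x}\geq k-1$. When $\Self\in X$ but $\Multi\notin X$ a single crossed edge is impossible, so $m_{\rm x}\geq 2$, whereas with neither restriction a lone self-crossing edge realizes $m_{\rm x}=1$ precisely when $k$ is even, which is the source of the $k\bmod 2$ term. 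Substituting the resulting lower bound on $m_{\rm x}$ (and hence on $N$, via $m_D=\frac{2}{k-2}(N-1)$) into \cref{lem:2-connected-count} produces the claimed $\alpha_{\Gamma_X}$ for tight drawings. The non-tight case is handled uniformly: plugging the same inequality into $\varepsilon\geq\frac{k}{2}m_{\rm x}-\mathrm{cr}$ lower-bounds $\varepsilon$ as a function of $m_{\rm x}$, and a short one-variable check shows $m_D/N$ never dips below $\alpha_{\Gamma_X}$.

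Several cases then come for free. Because $\alpha_{\Gamma_X}$ can only increase as $X$ grows, the lower bounds for $\{\Inc\}$, $\{\Self,\Inc\}$ and $\{\Self,\Inc,\Multi\}$ follow from those of their superclasses $\emptyset$, $\{\Self\}$ and $\{\Self,\Multi\}$. The branching style is not monotone, so \cref{lem:wlog-2-connected} does not apply directly; instead I would observe that a branching-saturated drawing is filled (\cref{lem:saturated-are-filled}) and also lies in the monotone superclass $\Gamma_{\{\Self,\Inc,\Multi\}}$, so running \cref{lem:wlog-2-connected} there gives the same lower bound $\frac{2(k+1)}{k(k-1)}(n-1)$.

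For the matching upper bound I would, for each $X$, exhibit the extremal tight, essentially $2$-connected drawing attaining the minimal $m_{\rm x}$ above (the constructions referenced in \cref{tab:results}, e.g.\ \cref{fig:tight-MS-k4} and \cref{fig:tight-SIM-k7}), place an isolated vertex in every empty cell so that $c_0=0$ and the drawing is tight, and verify membership in $\Gamma_X$; by \cref{lem:bound-from-tight} such a drawing is $\Gamma_X$-saturated and filled with ratio exactly $\alpha_{\Gamma_X}$. Chaining $t$ copies at shared cut-vertices preserves tightness and the ratio while giving $n=t\,N_{\min}+1$, yielding infinitely many $n$ with exactly $\alpha_{\Gamma_X}(n-1)$ edges. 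I expect the constructions, rather than the counting, to be the main obstacle: one must realize the minimal edge number under all active restrictions simultaneously, and this is exactly where the exceptions live—the parity split for $X=\emptyset,\{\Inc\}$, the jump from $\tfrac34$ to $\tfrac45$ for $\{\Inc,\Multi\}$ at $k=4$ where forbidding incident crossings costs one extra edge, and the threshold $k\geq 7$ for the fully restricted and branching styles, where only for large $k$ can a $(k{+}1)$-edge drawing be made single-crossing, selfcrossing-free, incident-crossing-free and homotopy-free at once.
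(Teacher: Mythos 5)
Your proposal reproduces the paper's strategy almost exactly: reduce to filled, essentially $2$-connected drawings via \cref{lem:saturated-are-filled} and \cref{lem:wlog-2-connected}, use \cref{lem:2-connected-count} to turn the edge count into the error term $\varepsilon$, match with tight constructions via \cref{lem:bound-from-tight}, transfer lower bounds along inclusions $\Gamma_X\subseteq\Gamma_{X'}$, detour through the monotone superclass $\Gamma_{\{\Self,\Inc,\Multi\}}$ for the non-monotone branching style, and glue copies at cut-vertices for infinitely many $n$. The restriction-specific crossing counts ($\mathrm{cr}\le\binom{m_{\rm x}}{2}+m_{\rm x}$ resp.\ $\mathrm{cr}\le\binom{m_{\rm x}}{2}$, forcing $m_{\rm x}\ge k-1$ resp.\ $m_{\rm x}\ge k+1$) and the treatment of the odd-$k$ selfcrossing edge are also the paper's arguments. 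One caveat of phrasing: the problem does not literally ``collapse to tight drawings,'' since a non-tight drawing with smaller $n+c_0$ could a priori beat the smallest tight one; but your subsequent handling of the case $0\le\varepsilon<1$ together with the bound $m_{D'}<m_{D_0}$ is exactly how the paper closes that loophole.

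There is, however, one genuine gap: the lower bound $\frac{4}{5}(n-1)$ for $X=\{\Inc,\Multi\}$ and $k=4$. Every crossing count you list ignores the restriction $\Inc$ entirely, so your framework can only deliver $m_{\rm x}\ge k-1=3$ and hence the bound $\frac{2(k-1)}{(k-1)(k-2)+2}=\frac34$, not $\frac45$. You attribute the jump to the difficulty of \emph{constructing} a $3$-edge tight drawing, but the burden here is on the \emph{lower bound}: one must rule out every filled, essentially $2$-connected drawing in $\Gamma_{\{\Inc,\Multi\}}$ with $m_{\rm x}\le 3$ and $2\mathrm{cr}\ge 4m_{\rm x}-1$, and mere non-existence of a tight example does not suffice. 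The paper needs two additional ideas for this: first, at least $2m_{\rm x}-n_{D'}$ pairs of crossed edges share an endpoint and hence cannot cross, improving the crossing bound to $\mathrm{cr}\le\binom{m_{\rm x}}{2}-m_{\rm x}+n_{D'}$; second, an Euler-formula count on the subdrawing of the planarization induced by the crossing points, combined with the filled condition (each face containing $t$ vertices contributes $t-1$ uncrossed edges), which yields $n_{D'}-m_{\rm p}\le\mathrm{cr}-m_{\rm x}+2$ and ultimately $m_{\rm x}+m_{\rm p}\ge 4$. Neither step is derivable from the toolkit in your proposal, so this sub-case needs a dedicated argument.
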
}
\both{\begin{proof}
 \longandshortorappendix{For space requirements we present four out of seven cases in detail,
 the other cases can be found in Appendix~\ref{apx:results}.
 We start with the cases when $X \subseteq \{\Self,\Inc,\Multi\}$.
 Here the drawing style $\Gamma_X$ is monotone and every $\Gamma_X$-saturated drawing is filled by \cref{lem:saturated-are-filled}.
 Thus, by \cref{lem:bound-from-tight}, we have $\alpha_\Gamma \leq \frac{2}{k-2} \cdot \frac{n_{D_0}-2}{n_{D_0}-1}$ for every tight drawing $D_0$.
 This gives the smallest bound when $n_{D_0}$ is minimized. %
 In this case $D_0$ is essentially $2$-connected and $m_{D_0} = \frac{2}{k-2}(n_{D_0}-2)$ by \cref{lem:2-connected-count}, since $n_{D_0}\geq 3$ for tight drawings.
 So it suffices to consider a tight drawing $D_0$ with the smallest possible number $m_{D_0}$ of edges.
 
 Next, we shall go through the possible subsets $X$ of $\{\Self,\Inc,\Multi\}$ and determine exactly the value $\alpha_\Gamma$ for $\Gamma = \Gamma_X$ in two steps.
 \begin{itemize}
  \item First, we present a tight (hence filled) drawing $D_0$ with the smallest possible number $m_{D_0}$ of edges, which gives by \cref{lem:bound-from-tight} the upper bound 
  \[
   \alpha_\Gamma \leq \frac{2}{k-2}\cdot \frac{n_{D_0}-2}{n_{D_0}-1}.
  \]
  
  \item Second, we argue that for every filled (hence also every $\Gamma_X$-saturated), essentially $2$-connected drawing $D' \in \Gamma_X$ we have
  \begin{equation}
   \frac{n_{D'}+c_0(D')-2 + \varepsilon(D')}{n_{D'}+c_0(D')-1\hspace{3.5em}} \geq \frac{n_{D_0}-2}{n_{D_0}-1},
   \label{eq:LB-on-alpha}
  \end{equation}
  which by \cref{lem:wlog-2-connected,lem:2-connected-count} then proves the matching lower bound:
  \begin{align*}
   \alpha_\Gamma = \min_{D'} \frac{m_{D'}}{n_{D'}+c_0(D')-1} &= \min_{D'} \frac{2}{k-2} \cdot \frac{n_{D'}+c_0(D')-2 + \varepsilon(D')}{n_{D'}+c_0(D')-1\hspace{3.5em}} \\
   &\overset{\eqref{eq:LB-on-alpha}}{\geq} \frac{2}{k-2} \cdot \frac{n_{D_0}-2}{n_{D_0}-1}
  \end{align*}
 \end{itemize}

 In order to verify \eqref{eq:LB-on-alpha}, observe that if $\varepsilon(D') \geq 1$, then the lefthand side is at least $1$, while the righthand side is less than $1$.
 Thus it is enough to verify \eqref{eq:LB-on-alpha} when $\varepsilon(D') < 1$.
 In particular we may assume $c'_2 = c_3 = 0$ and $2\mathrm{cr} \geq km_{\rm x} - 1$ for $D'$.
 Similarly, as $\varepsilon(D') \geq 0$, we may assume that $n_{D'}+c_0(D') \leq n_{D_0}-1$.
 Altogether this implies that \eqref{eq:LB-on-alpha} is fulfilled unless
 \begin{equation*}
  m_{D'} = \frac{2}{k-2} (n_{D'}+c_0(D')-2+\varepsilon(D')) < \frac{2}{k-2}(n_{D_0}-1-2+1) = m_{D_0}.
 \end{equation*}
 In summary, for each $X$ we shall give a tight drawing $D_0$ with as few edges as possible, and argue that every filled, essentially $2$-connected drawing $D'$ with fewer edges satisfies the inequality \eqref{eq:LB-on-alpha}.
 Note that $m_{D'} \geq 1$ as essentially $2$-connected drawings have at least one edge.
 In fact, we may assume that $D'$ contains at least one crossed edge. 
 Otherwise $D'$ is filled, planar and hence connected.
 Thus $m_{D'} \geq n_{D'}-1$ and $c_0(D') = 0$ which verifies \eqref{eq:LB-on-alpha} as follows:
 \begin{align*}
  \frac{n_{D'}-c_0(D')-2+\varepsilon(D')}{ n_{D'}-c_0(D')-1\hspace{3.5em}} 
  = \frac{k-2}{2}\cdot  \frac{m_{D'}}{n_{D'}-1}\geq 1 > \frac{n_{D_0}-2}{n_{D_0}-1}
 \end{align*}}{%
 Here, we present the missing cases from the proof of \cref{thm:results}.
 }

 \begin{description}
  \shortandlong{%
  \item[Case 1. $X = \{\Inc\}$ and $X = \emptyset$]{\ \\}
  \cref{fig:tight-I-k4} shows drawings $D_0$ with $m_{D_0}=1$ edge when $k$ is even, and $m_{D_0} = 2$ edges when $k$ is odd, which are tight for $\Gamma = \Gamma_X$ for both $X = \{\Inc\}$ and $X = \emptyset$, as incident edges do not cross.
  Thus $m_{D_0} = 1 + (k \bmod 2)$ and $n_{D_0} = \frac{k+2}{2}$ for $k$ even, respectively $n_{D_0} = k$ for $k$ odd.
  Together this gives $\alpha_\Gamma \leq \frac{2}{k-2}\cdot \frac{n_{D_0}-2}{n_{D_0}-1} = \frac{2}{k - (k \bmod 2)}$.
  
  \begin{figure}[t]
   \centering
   \includegraphics{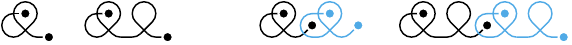}
   \caption{Smallest tight drawings for even $k\geq4$ (left) and odd $k\geq 4$ (right) in case $X = \emptyset$ and $X = \{\Inc\}$, i.e. nothing, resp. incident crossings, are forbidden. (Isolated vertices in empty cells are omitted.)}
   \label{fig:tight-I-k4}
  \end{figure}
 
  On the other hand, let $D' \in \Gamma_X$ be any filled, essentially $2$-connected drawing.
  As argued above, we may assume that $1 \leq m_{\rm x} \leq m_{D'} < m_{D_0}$.
  For even $k$, there is nothing to show as $m_{D'} \geq 1 = m_{D_0}$.
  For odd $k$, we may assume that $D'$ consists of exactly one edge, which has exactly $(k-1)/2$ selfcrossings (since $2\mathrm{cr} \geq km_{\rm x} - 1$), and some of the resulting cells may contain an isolated vertex.
  In particular, $\varepsilon(D') \geq \frac{k}{2}m_{\rm x} - \mathrm{cr} = 1/2$. 
  Applying Euler's formula to the planarization of $D'$ we get $n_{D'} + c_0(D') = (k+1)/2$, which verifies \eqref{eq:LB-on-alpha} as follows:
  \begin{align*}
    \frac{n_{D'}+c_0(D')-2 + \varepsilon(D')}{n_{D'}+c_0(D')-1\hspace{3.5em}} 
    \geq \frac{(k+1)/2 - 2 +1/2}{(k+1)/2 - 1}
    = \frac{k-2}{k-1} 
    = \frac{n_{D_0}-2}{n_{D_0}-1}.%
  \end{align*}}%
\appendixandlongorshort{%
  \item[Case 2. $X = \{\Self\}$ and $X = \{\Self,\Inc\}$]{\ \\}
  \cref{fig:tight-SI-S-k4} shows drawings $D_0$ with $m_{D_0}=2$ edges which are tight for $\Gamma = \Gamma_X$ for both $X = \{\Self\}$ and $X = \{\Self,\Inc\}$, as there are neither incident crossings nor selfcrossings.
  Thus $m_{D_0} = 2$ and $n_{D_0} = k$, which gives $\alpha_\Gamma \leq \frac{2}{k-2}\cdot \frac{n_{D_0}-2}{n_{D_0}-1} = \frac{2}{k - 1}$.
  
  \begin{figure}[t]
  \centering
  \includegraphics{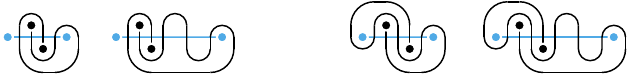}
  \caption{Smallest tight drawings for $k\geq4$ in case $X = \{\Self\}$ and $X = \{\Self,\Inc\}$, i.e. selfcrossings (resp. also incident crossings) are forbidden. (Isolated vertices in empty cells are omitted.)}
  \label{fig:tight-SI-S-k4}
  \end{figure}
  
  On the other hand, let $D'$ be any drawing in $\Gamma_X$, and assume again that $1 \leq m_{\rm x} < m_{D_0} = 2$.
  In particular, $D'$ has exactly one crossed edge, which however is impossible as selfcrossings are forbidden.%
  \smallskip
  
  \item[Case 3. $X = \{\Multi\}$]{\ \\}
  \cref{fig:tight-M-k4} shows tight drawings $D_0$ with $m_{D_0}=k-1$ edges.
  Thus $n_{D_0} = \frac{k-2}{2}m_{D_0} + 2 = \binom{k-1}{2}+2$, which gives
  \[
  \alpha_\Gamma \leq \frac{2}{k-2}\cdot \frac{n_{D_0}-2}{n_{D_0}-1} = \frac{2}{k-2} \cdot \frac{\binom{k-1}{2}}{\binom{k-1}{2}+1} = \frac{2(k-1)}{(k-1)(k-2)+2}.
  \]
  
  \begin{figure}[t]
  \centering
  \includegraphics{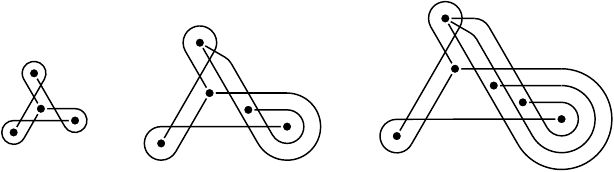}
  \caption{Smallest tight drawings for $k\geq4$ in case $X = \{\Multi\}$, i.e. multicrossings are forbidden. (Isolated vertices in empty cells are omitted.)}
  \label{fig:tight-M-k4}
  \end{figure}
  
  On the other hand, let $D'$ be any drawing in $\Gamma_X$.
  As argued above the desired inequality \eqref{eq:LB-on-alpha} holds, unless $km_{\rm x} -1 \leq 2\mathrm{cr}$ and $1 \leq m_{\rm x} \leq m_{D'} < m_{D_0} = k-1$.
  As there are no multicrossings, the crossed edges may pairwise cross at most once, and additionally each crossed edge may cross itself at most once, i.e., $\mathrm{cr} \leq m_{\rm x} + \binom{m_{\rm x}}{2} = \binom{m_{\rm x}+1}{2}$.
  However, this would imply
  \[
  km_{\rm x}-1 \leq 2\mathrm{cr} \leq (m_{\rm x}+1)m_{\rm x} \leq (k-2+1)m_{\rm x} = km_{\rm x} - m_{\rm x},
  \]
  and thus $m_{\rm x} = 1$.
  However, then $2\mathrm{cr} \geq km_{\rm x} - 1 = k-1 \geq 3$, which contradicts that there are no multicrossings.%
  \smallskip}{%
  \item[Case 2. $X = \{\Self\}$ and $X = \{\Self,\Inc\}$ ($\star$)]{\ }\smallskip
  \item[Case 3. $X = \{\Multi\}$ ($\star$)]{\ }\smallskip
  }
  
  \shortandlong{%
  \item[Case 4. $X = \{\Self,\Multi\}$]{\ \\}
  \cref{fig:tight-MS-k4} shows tight drawings $D_0$ with $m_{D_0}=k+1$ edges.
  Thus $n_{D_0} = \frac{k-2}{2}m_{D_0} + 2 = \binom{k}{2}+1$, which gives
  \[
   \alpha_\Gamma \leq \frac{2}{k-2}\cdot \frac{n_{D_0}-2}{n_{D_0}-1} = \frac{2}{k-2} \cdot \frac{\binom{k}{2}-1}{\binom{k}{2}} = \frac{2(k+1)}{k(k-1)}.
  \]
  
  \begin{figure}[t]
   \centering
   \includegraphics{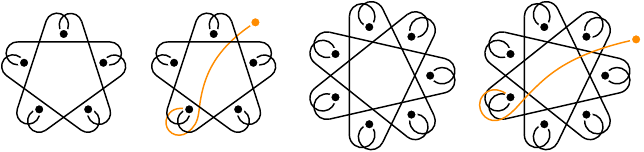}
   \caption{Smallest tight drawings for $k\geq4$ in case $X = \{\Self,\Multi\}$, i.e. selfcrossings and multicrossings are forbidden. (Isolated vertices in empty cells are omitted.)}
   \label{fig:tight-MS-k4}
  \end{figure}

  On the other hand, let $D'$ be any drawing in $\Gamma_X$.
  Again \eqref{eq:LB-on-alpha} holds, unless $km_{\rm x} -1 \leq 2\mathrm{cr}$ and $1 \leq m_{\rm x} \leq m_{D'} < m_{D_0} = k+1$.
  As there are no multicrossings and no selfcrossings, we have $\mathrm{cr} \leq \binom{m_{\rm x}}{2}$.
  However, this would imply $km_{\rm x}-1 \leq 2\mathrm{cr} \leq m_{\rm x}(m_{\rm x}-1) \leq k(m_{\rm x}-1) = km_{\rm x} - k \leq km_{\rm x}-4$, which is a contradiction.%
  \smallskip}
 
 \appendixandlongorshort{%
  \item[Case 5. $X = \{\Inc,\Multi\}$]{\ \\}
  \cref{fig:tight-MI-k45,fig:tight-MI-k6} show tight drawings $D_0$ with $m_{D_0}=4$ edges for $k=4$, and $m_{D_0} = k-1$ edges for $k \geq 5$.
  
  For $k = 4$ we have $n_{D_0} = \frac{k-2}{2}m_{D_0}+2 = 6$, which gives
  \[
  \alpha_\Gamma \leq \frac{2}{k-2}\cdot \frac{n_{D_0}-2}{n_{D_0}-1} = \frac{2}{4-2}\cdot \frac{6-2}{6-1} = \frac{4}{5}.
  \]
  For $k \geq 5$ we have analogous to Case 3 $n_{D_0} = \binom{k-1}{2}+2$, which gives
  \[
  \alpha_\Gamma \leq \frac{2}{k-2}\cdot \frac{n_{D_0}-2}{n_{D_0}-1} = \frac{2}{k-2} \cdot \frac{\binom{k-1}{2}}{\binom{k-1}{2}+1} = \frac{2(k-1)}{(k-1)(k-2)+2}.
  \]
  
  \begin{figure}[t]
  \centering
  \includegraphics{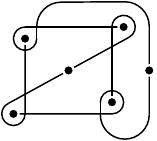}
  \hspace{4em}
  \includegraphics{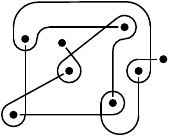}
  \caption{Smallest tight drawings for $k=4$ (left) and $k=5$ (right) in case $X = \{\Inc,\Multi\}$, i.e. incident crossings and multicrossings are forbidden. (Isolated vertices in empty cells are omitted.)}
  \label{fig:tight-MI-k45}
  \end{figure}

  \begin{figure}[t]
  \centering
  \includegraphics[page=1]{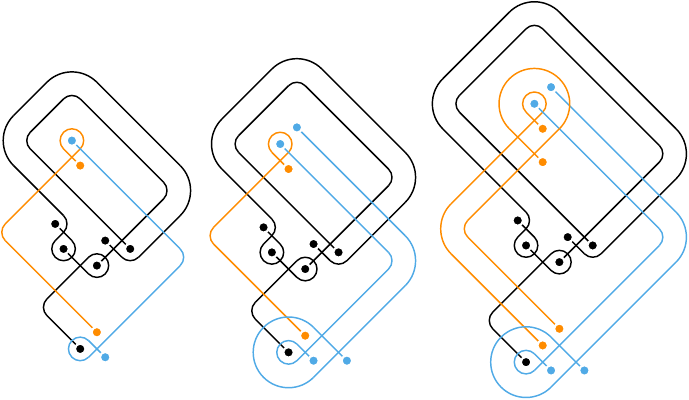}
  \caption{Smallest tight drawings for $k\geq6$ in case $X = \{\Inc,\Multi\}$, i.e. incident crossings and multicrossings are forbidden. (Isolated vertices in empty cells are omitted.)}
  \label{fig:tight-MI-k6}
  \end{figure} 
  
  On the other hand, let $D'$ be any drawing in $\Gamma_X$.
  Clearly, $D' \in \Gamma_X \subset \Gamma_{\{\Multi\}}$ for $\{\Multi\} \subset X = \{\Inc,\Multi\}$.
  However, we already argued in Case 3 that there is no drawing $D'$ in $\Gamma_{\{\Multi\}}$ with $km_{\rm x} - 1 \leq 2\mathrm{cr}$ and $1 \leq m_{\rm x} < k-1$.
  This already seals the deal for $k \geq 5$.
  
  For $k=4$, assume that $D'$ is a filled, essentially $2$-connected drawing in $\Gamma_X$.
  As before, we may assume that $2\mathrm{cr} \geq km_{\rm x}-1 = 4m_{\rm x}-1$, i.e., there are at least $2m_{\rm x}$ crossings.
  On the other hand, as multicrossings are forbidden, we have again $\mathrm{cr} \leq m_{\rm x} + \binom{m_{\rm x}}{2}$, which together implies that $m_{\rm x} \geq 3$.
  We are done if $m_{\rm x} + m_{\rm p} \geq 4 = k$.
  Otherwise, we have $m_{\rm x} = 3$ and $\mathrm{cr} \geq 2m_{\rm x} = 6$.
  Now if two of the three crossed edges were incident, they would not cross each other as $\Inc \in X$, which would give at most $3$ self-crossings and $2$ crossings of independent edges, contradicting $\mathrm{cr} \geq 6$.
  Thus we may assume that all three crossed edges are crossing themselves, pairwise crossing and pairwise independent, i.e., $n_{D'} = 6$ and $\mathrm{cr} = 6$.
  
  Now let us consider the subdrawing $H$ of the planarization of $D'$ obtained by removing all vertices of $D'$.
  I.e., $H$ is planar, connected, $|V(H)| = \mathrm{cr} = 6$, and $|E(H)| = (k-1)m_{\rm x} = 3 m_{\rm x} = 9$.
  Applying Euler's formula shows that the number of faces of $H$ is $|E(H)|-|V(H)|+2 = 9 - 6 + 2 = 5$.
  As $n_{D'} \geq 6$, some face of $H$ contains two vertices of $D'$ in its interior, showing that $m_{\rm p} \geq 1$, as $D'$ is filled.
  Thus $m_{\rm x} + m_{\rm p} \geq 3+1 = 4 = k$, as desired.
  \smallskip}{%
  \item[Case 5. $X = \{\Inc,\Multi\}$ ($\star$)]{\ }\smallskip
  }
  
  \shortandlong{%
  \item[Case 6. $X = \{\Self,\Inc,\Multi\}$]{\ \\}
	  The right of \cref{fig:k-planar-examples} (with isolated vertices added to both empty cells) and \cref{fig:tight-SIM-k7} show tight drawings $D_0$ with $m_{D_0} = k+1$ edges for $k \geq 6$.
  Analogous to Case 4 $n_{D_0} = \binom{k}{2}+1$, which gives
  \[
   \alpha_\Gamma \leq \frac{2}{k-2}\cdot \frac{n_{D_0}-2}{n_{D_0}-1} = \frac{2}{k-2} \cdot \frac{\binom{k}{2}-1}{\binom{k}{2}} = \frac{2(k+1)}{k(k-1)}.
  \]

  \begin{figure}[t]
  	\centering
  	\includegraphics[]{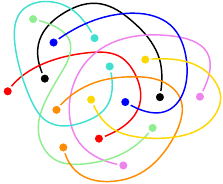}
  	\hspace{3em}
  	\includegraphics[page=1]{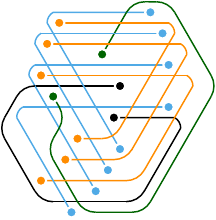}\\[\bigskipamount]
  	\includegraphics[page=2]{8-planar-9-matching}
  	\hspace{3em}
  	\includegraphics[page=3]{8-planar-9-matching}
	  \caption{Smallest tight drawings for $k\geq 7$ (for $k=6$, see \cref{fig:k-planar-examples} (right)) in case $X = \{\Self,\Inc,\Multi\}$, i.e. selfcrossings, incident crossings, and multicrossings are forbidden. 
  	Top-Left: The $8$-matching for $k=7$.
  	Top-Right: The $9$-matching for $k=8$. Bottom-Left: The $10$-matching for $k=9$.
  	Bottom-Right: The $11$-matching for $k=10$. 
  	(Isolated vertices in empty cells are omitted.)}
  	\label{fig:tight-SIM-k7}
  \end{figure}

  On the other hand, any drawing $D' \in \Gamma_X$ is also a drawing in $\Gamma_{\{\Self,\Multi\}}$ for $\{\Self,\Multi\} \subset X = \{\Self,\Inc,\Multi\}$.
  However, we already argued in Case 4 that there is no drawing $D' \in \Gamma_{\{\Self,\Multi\}}$ with $km_{\rm x}-1 \leq \mathrm{cr}$ and $m_{\rm x} < m_{D_0} = k+1$.  
\smallskip

  \item[Case 7. $X = \{\Self,\Inc,\Multi,\Homoto\}$]{\ \\}
  We can not proceed with $X = \{\Self,\Inc,\Multi,\Homoto\}$ as before, since $\Gamma_X$ is not monotone in that case.
  However, we see that the tight drawings $D_0$ in \cref{fig:k-planar-examples} (right) and \cref{fig:tight-SIM-k7} for drawing style $\Gamma_{\{\Self,\Inc,\Multi\}}$ are also in $\Gamma_X$ as there are no parallel edges and hence no homotopic edges.
  Thus 
  \begingroup
  \allowdisplaybreaks
  \begin{align*}
   \frac{m_{D_0}}{n_{D_0}-1} 
   &\geq \min\left\{ \frac{m_D}{n_D-1} \colon D \in \Gamma_X \text{ is $\Gamma_X$-saturated } \right\}\\
   &\geq \min\left\{ \frac{m_D}{n_D-1} \colon D \in \Gamma_X \text{ is filled } \right\}\\
   &\geq \min\left\{ \frac{m_D}{n_D-1} \colon D \in \Gamma_{\{\Self,\Inc,\Multi\}} \text{ is filled } \right\}\\
   &= \min\bigg\{ \frac{m_D}{n_D+c_0(D)-1} \colon D \in \Gamma_{\{\Self,\Inc,\Multi\}} \\
   &\hspace{2cm}\text{ is filled and essentially $2$-connected } \bigg\}\\
   &= \alpha_{\Gamma_{\{\Self,\Inc,\Multi\}}} = \frac{2}{k-2}\cdot \frac{n_{D_0}-2}{n_{D_0}-1} = \frac{m_{D_0}}{n_{D_0}-1}
  \end{align*}
  \endgroup
  and equality holds throughout.
  Hence, for every filled and every $\Gamma_X$-saturated drawing $D$ in $\Gamma_X$ we have $m_D \geq \alpha_{\Gamma_{\{\Self,\Inc,\Multi\}}} \cdot (n_D-1) = \frac{2(k+1)}{k(k-1)} \cdot (n-1)$.}
 \end{description}

\shortandlong{
 In Cases 1--6 we have determined exactly $\alpha_\Gamma$ for each considered drawing style $\Gamma = \Gamma_X$.
 By \cref{lem:wlog-2-connected,lem:saturated-are-filled} every $\Gamma$-saturated drawing $D$ satisfies $m_D \geq \alpha_\Gamma(n_D-1)$.
 For Case 7 we have shown this inequality directly.
 Moreover, we presented in each case a tight drawing $D_0$ attaining this bound:
 \[
  m_{D_0} = \frac{2}{k-2}(n_{D_0}-2) = \frac{2}{k-2}\cdot \frac{n_{D_0}-2}{n_{D_0}-1} \cdot (n_{D_0}-1) = \alpha_\Gamma(n_{D_0}-1)
 \]
 It remains to construct an infinite family of $\Gamma$-saturated drawings attaining this bound.
 To this end it suffices to take tight drawings with $\alpha_\Gamma(n-1)$ edges and iteratively glue these at single vertices, which always results in a tight drawing again.
 
 Formally, for vertices $v_1,v_2$ in two (not necessarily distinct) tight drawings $D_1$ and $D_2$, respectively, with $m_{D_i} = \alpha_\Gamma(n_{D_i}-1)$ for $i=1,2$, we consider the drawing $D$ obtained from $D_1,D_2$ by identifying $v_1$ and $v_2$ into a single vertex and putting $D_2$ completely inside a cell of $D_1$ incident to $v_1$.
 Then $D$ is again tight and thus $\Gamma$-saturated.
 Moreover we have $n_D = n_{D_1}+n_{D_2} -1$ and
 \[
  m_D = m_{D_1} + m_{D_2} = \alpha_\Gamma(n_{D_1}-1) + \alpha_\Gamma(n_{D_2}-1) = \alpha_\Gamma(n_D - 1).\qedhere
 \]}
\end{proof}}

\section{Bounds for Simple Graphs}
\label{sec:simplegraphs}
\latertitle{Omitted Proofs of \cref{sec:simplegraphs}}
We define a \emph{simple filled} drawing $D$ of a simple graph $G$ as a drawing in which 
any two vertices that are incident to the same cell $c$ of $D$ are connected.
In contrast to filled drawings (according to \Cref{def:filled}) the connecting edge may (partially or completely) lie outside of the boundary of $c$.
With this definition in mind,
\cref{lem:wlog-2-connected,lem:saturated-are-filled}
directly translate to the simple graph setting 
(note that $\Homoto \not\in X$ for any drawing style $\Gamma_X$ in this setting).
\cref{lem:2-connected-count} though does not translate and 
consequently neither does the bound in \cref{lem:bound-from-tight}.
We obtain the following bound on $m_D$.

\begin{figure}[b]
  \centering
  \begin{minipage}[t]{.45\textwidth}
    \centering
    \includegraphics{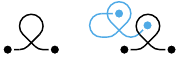}
  \end{minipage}
  \quad
  \begin{minipage}[t]{.45\textwidth}
    \centering
    \includegraphics{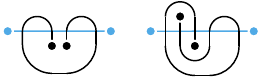}
  \end{minipage}
  \caption{Modifications of the constructions used in \cref{thm:results} for 
  $X =\emptyset$ and $X = \{\Inc\}$ on the left and $X = \{\Self\}$ and $X = \{\Self,\Inc\}$ on the right.}
  \label{fig:simple-mod}
\end{figure}

\both{\begin{lemma}[$\star$]
  For any $k$-planar simple filled and essentially $2$-connected drawing $D$
  it holds that $m_D \geq \frac{2}{k+2}(n_D - 1)$.
\end{lemma}}
\later{\begin{proof}
  Consider the planarization $P$ of $D$.
  As in the proof of \cref{lem:2-connected-count} we find that
  with $P$ being essentially $2$-connected it has has exactly
  $\#{\rm isolated} + 1$ connected components where $\#{\rm isolated}$ is the number of isolated vertices.
  Moreover, for the number of vertices and edges of $P$ it holds that $|V(P)| = n_D + {\rm cr}$ and $|E(P)| = m_D + 2{\rm cr}$,
  with ${\rm cr}$ being the number of crossings in $D$.
  Let $\#{\rm cells}$ be the number of faces of $P$.
  Since $D$ is simple filled it holds that
  $\#{\rm cells} \geq \#{\rm isolated} + 1$.
  By applying Euler's formula we obtain
  \begin{align*}
    2 &= |V(P)| - |E(P)| + \#{\rm cells} - \#{\rm isolated}\\
      &= n_D - m_D - {\rm cr} + \#{\rm cells} - \#{\rm isolated}\\
      &\geq n_D - m_D - {\rm cr} + \#{\rm isolated} + 1 - \#{\rm isolated}\\
      &= n_D - m_D - {\rm cr} + 1.
  \end{align*}
  Hence, $m_D + {\rm cr} \geq n_D - 1$ and with ${\rm cr} \leq \frac{k}{2}m_D$ we obtain the desired bound.
\end{proof}}
Consequently we get for any simple filled drawing $D \in \Gamma$ (and hence for every saturated $k$-planar drawing of a simple graph) that
\begin{align*}
  \frac{m_D}{n_D-1} \geq \min_{D'} \frac{m_{D'}}{n_{D'}-1}
 \geq \min_{D'} \frac{2}{k+2} \cdot \frac{n_{D'} - 1}{n_{D'}-1} = \frac{2}{k+2},
\end{align*}
where both minima are taken over all $k$-planar simple filled, essentially $2$-connected drawings $D' \in \Gamma$.

\begin{figure}[bt]
  \centering
  \begin{minipage}[t]{.32\textwidth}
    \centering
    \includegraphics[page=4]{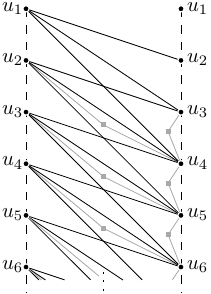}
    \label{fig:1psimple}
  \end{minipage}
  \hfill
  \begin{minipage}[t]{.32\textwidth}
    \centering
    \includegraphics[page=5]{simple_upperbound_cylinder}
    \label{fig:2psimple}
  \end{minipage}
  \hfill
  \begin{minipage}[t]{.32\textwidth}
    \centering
    \includegraphics[page=6]{simple_upperbound_cylinder}
    \label{fig:3psimple}
  \end{minipage}
  \caption{Construction for saturated simple $k$-plane drawings. 
    The dashed left and right sides of the drawings are identified.
  }
  \label{fig:simple}
\end{figure}
\later{\begin{figure}[bt]
  \centering
  \begin{minipage}[t]{.32\textwidth}
    \centering
    \includegraphics[page=1]{simple_upperbound_cylinder}
    \label{fig:1psimple:long}
  \end{minipage}
  \hfill
  \begin{minipage}[t]{.32\textwidth}
    \centering
    \includegraphics[page=2]{simple_upperbound_cylinder}
    \label{fig:2psimple:long}
  \end{minipage}
  \hfill
  \begin{minipage}[t]{.32\textwidth}
    \centering
    \includegraphics[page=3]{simple_upperbound_cylinder}
    \label{fig:3psimple:long}
  \end{minipage}
  \caption{Construction for saturated simple $k$-plane drawings. 
    The dashed left and right sides of the drawings are identified.
  }
  \label{fig:simple:long}
\end{figure}}

Considering upper bounds on the minimum number of edges in any $\Gamma_X$-saturated $k$-planar drawing of a simple graph,
we show in the following theorem that for any drawing style $X \subseteq \{\Self,\Inc,\Multi\}$ there exist sparser drawings
than for multigraphs.
Moreover, for $X = \emptyset$ and $X = \{\Inc\}$ the resulting bound is tight.

\both{\begin{theorem}[$\star$]
  \label{thm:simpleresults}
  Let $X \subseteq \{\Self,\Inc,\Multi\}$ be a set of restrictions, and
  $\Gamma = \Gamma_X$ be the corresponding drawing style of $k$-planar drawings of simple graphs.
  For infinitely many values of $n$,
  the minimum number of edges in any $n$-vertex $\Gamma$-saturated drawing is upper bounded by
  \begin{center} \renewcommand{\arraystretch}{1.5}
    \begin{tabular}{rl}
      $\frac{2}{k + 2}(n-1)$ & \quad for $X = \{\Inc\}$ and $X = \emptyset$ and $k \geq 2$. \hfill(\cref{fig:simple-mod})\\
      $\frac{2}{k + ((k+1)\mod 2)}(n-1)$ & \quad for $X = \{\Self\}$ and $X = \{\Self,\Inc\}$ and $k \geq 4$. \hfill(\cref{fig:simple-mod})\\  
    $\frac{2}{k-1}(n-1)$ & \quad for $\Multi \in X$ and $X \subseteq \{\Self,\Inc\}$ and $k \geq 1$. \hfill(\cref{fig:simple})\\
    \end{tabular}
  \end{center}
  
\end{theorem}}
\later{\begin{proof}
  For $X = \emptyset$ and $X = \{\Inc\}$, as well as $X = \{\Self\}$ and $X = \{\Self, \Inc\}$ we modify the constructions used in \cref{thm:results}.
  In \cref{fig:simple-mod} we show the modifications.
  Taking disjoint unions of one of these drawings by placing instead of an isolated vertex
  the whole drawing again into an empty cell 
  leads a saturated $k$-planar drawing of the respective drawing style 
  with $m_D = \frac{m_{D_0}}{n_{D_0} + c_0(D_0) - 1}(n_D - 1)$ edges.
  In some sense, this is exactly the first argument of \cref{lem:bound-from-tight} in reverse, i.e.,
  instead of replacing parts of the drawing by isolated vertices, we replace an isolated vertex by a drawing.
    
  In case of $X = \emptyset$ and $X = \{\Inc\}$ this leads to $m_D = \frac{1}{2 + k - 1}(n_D-1) = \frac{2}{k+2}(n_D-1)$ many edges.
  For $X = \{\Self\}$ and $X = \{\Self, \Inc\}$ we get $m_D = \frac{2}{4 + k - 2 - 1}(n_D-1) = \frac{2}{k + 1}(n_D-1)$ if $k$ is even and
  $m_D = \frac{2}{4 + k - 3 - 1}(n_D-1) = \frac{2}{k}(n_D-1)$ if $k$ is odd. 

  For $X = \{\Self,\Inc,\Multi\}$ we modify a construction originally 
  presented by Brandenburg et al.~\cite{BrandenburgEGGHR12} for $1$-planar drawings and 
  adapted by Auer et al.~\cite{AuerBGH12} for $2$-planar drawings.
  Here, we generalize the construction to $k$-planar drawings.
  See \cref{fig:simple:long} for an illustration for the cases $k = 1$, $2$, and $3$.
  The construction is more easily imagined on a cylinder.
  We describe it here for $k \geq 3$.
  Let $D'$ be the drawing, it consists of a path on $n_{D'}$ vertices $u_1,\ldots,u_n$ with
  the vertices being laid out along a vertical line from the top to the bottom of the cylinder and
  each vertex $u_i$ with $i \not\in \{1,n\}$ being connected to $u_{i+1}$ and 
  $u_{j}$ with $j = i + k + 2$ where we do not add an edge if $u_{j}$ does not exist.
  Finally, we add all edges $u_1u_j$ for $j = 2,\ldots,k+2$ and 
  $u_ju_n$ for $j = n_{D'} - (k + 2)$.
  As a result we obtain a series of cells with no vertex on their boundaries 
  in which we add isolated vertices.
  Clearly this drawing is $k$-planar and no edge can be added without crossing another edge more than $k$ times.
  Moreover, there are no multiple-, incident-, or selfcrossings and hence $D'$ is $\Gamma_X$-saturated.
  Moreover, for $n_{D'} \geq k+3$ the drawing has $m_{D'} = 2n_{D'} + k - 3$ many edges.
  It remains to count how many isolated vertices we can add.
  For every edge $u_iu_{i+1}$ with $i = k,\ldots,n_{D'}-k-1$ bounds $k-2$ cells in which we can place an isolated vertex.
  Additionally, the edges $u_iu_{i+1}$ with $i = 3,\ldots,k-1$ and $i = n_{D'} - 4, \ldots, n_{D'} - k$
  bound $1,\ldots,k-3$ many cells in which we can place one isolated vertex each.
  In total we get that \[n_D = n_{D'} + n_{D'}(k-2) - 2k(k-2) + (k-3)(k-2).\]
  Solving for $n_{D'}$ we obtain that
  \[n_{D'} = \frac{n_D + k^2 +k -6}{k-1}.\]
  Plugging the above into $m_D = 2n_{D'} + k - 3$ we finally obtain
  \begin{align*}
    m_D = 2\frac{n_D + k^2 +k -6}{k-1} + k - 3 = \frac{2}{k - 1}(n_D - 1) + \frac{3k^2 - 2k - 9}{k-1}.
  \end{align*}\qedhere
\end{proof}}

\section{Concluding Remarks}
\label{sec:conclusions}

With respect to multicrossings, in this work we either disallowed their existence (\Multi) or did not restrict their number. 
It is possible to make a more fine-grained analysis and consider the maximum number of times that a pair of edges (or an edge with itself) is allowed to cross as a parameter $\mu$. 
Modifications of our constructions, for example 
retracing a side of each edge in the construction in \cref{fig:tight-SIM-k7} from both endpoints, 
yield tight bounds for
arbitrarily many values of $k$ and $\mu$.

Our drawings typically contain a large amount of isolated vertices.
We discussed the case that isolated vertices are not desired already in the introduction: in this case the sparsest graphs possible are matchings and saturated $k$-planar drawings of matchings indeed exist for $k\geq 6$
(see~\cref{fig:k-planar-examples,fig:tight-SIM-k7}).
These drawings are simple and hence contained in all specific drawing styles that we consider.
Disjoint unions of these drawings also yield arbitrarily large saturated drawings of matchings for any fixed $k\geq 6$.
For $k\leq 5$ saturated $k$-planar drawings of matchings do not exist provided homotopic parallel edges are allowed.
For other drawing styles and for simple graphs saturated drawings of matchings may exist also in case $k\leq 5$.
For connected graphs, our best answers are the saturated drawings of cycles depicted in \cref{fig:tight-MS-k4,fig:k-planar-examples} and (non-simple) drawings of trees in \cref{fig:tight-M-k4}.
It is an interesting question to characterize those trees that admit saturated drawings for some fixed $k$ (with respect to the drawing styles discussed here).

For simple graphs, it is a relevant open question to determine the minimum number of edges 
in a saturated $k$-planar simple drawing.
Finally, our techniques only work for fixed drawings.
It remains open to determine the min-saturated $k$-planar (abstract) graphs and the sizes of their edge sets.
 
\bibliography{k-planar-gd}

\ifbool{long}{}{%
  \clearpage
  \appendix%
  \magicappendix%
}%

\end{document}